\documentclass[a4paper,11pt]{article}
\usepackage[T1]{fontenc}
\usepackage{lmodern}
\usepackage[margin=26mm]{geometry}
\usepackage{amsmath,amssymb,amsthm,mathtools,booktabs,array}
\usepackage{enumitem,xcolor,xurl,authblk}
\usepackage[colorlinks=true,linkcolor=blue!40!black,citecolor=blue!40!black,urlcolor=blue!40!black]{hyperref}
\numberwithin{equation}{section}
\theoremstyle{definition}
\newtheorem{theorem}{Theorem}[section]
\newtheorem{lemma}[theorem]{Lemma}
\newtheorem{proposition}[theorem]{Proposition}
\theoremstyle{remark}

\newcommand{\R}{\mathbb R}
\newcommand{\Ree}{\operatorname{Re}}
\newcommand{\K}{\mathcal K}

\title{Robust Strictly Positive Real Synthesis\\ for Sixth-Order Interval Polynomial Families}
\makeatletter
\renewcommand{\maketitle}{\begin{center}{\LARGE\bfseries\@title\par}\vspace{0.6em}{\@author\par}\end{center}\vspace{0.6em}}
\makeatother
\author[1]{Tiancheng Xu}
\author[1]{Guowei Dou}
\author[1]{Xiang Ji}
\author[2,3]{Long Wang}
\author[1]{Wensheng Yu}
\affil[1]{School of Electronic Engineering, Beijing University of Posts and Telecommunications, Beijing 100876, China}
\affil[2]{Center for Systems and Control, College of Engineering, Peking University, Beijing 100871, China}
\affil[3]{Center for Multi-Agent Research, Institute for Artificial Intelligence, Peking University, Beijing 100871, China}
\date{}
\hypersetup{pdftitle={Robust Strictly Positive Real Synthesis for Sixth-Order Interval Polynomial Families}}
\begin{document}
\maketitle
\begin{abstract}
Every Hurwitz-stable interval family of monic real polynomials of degree six admits a single real numerator of degree six that makes all the associated transfer functions strictly positive real. We give a constructive proof. The complete existence theorem has been formalized in Lean~4.
\end{abstract}
\noindent\textbf{Keywords:} strictly positive real synthesis; interval polynomial; Hurwitz stability; polynomial interpolation; polynomial nonnegativity; formal verification

\section{Introduction}
Strictly positive real transfer functions play a central role in passivity, absolute stability, adaptive control, and system identification. In particular, the design conditions of Dasgupta and Bhagwat~\cite{DasguptaBhagwat1987} connect strict positive realness with adaptive output-error identification. Under coefficient uncertainty, the design problem is to find one numerator that works for every admissible denominator. The coefficients of that numerator must remain fixed as both the plant and the frequency vary.

Robust SPR analysis and robust SPR synthesis have different requirements. For a prescribed numerator and an interval denominator family, the frequency inequalities reduce to four endpoint combinations. Kharitonov's theorem~\cite{Kharitonov1978} gives the corresponding four-polynomial test for Hurwitz stability. These finite reductions do not themselves produce a common numerator. The synthesis question is whether the infinitely many inequalities associated with all frequencies have a simultaneous solution of the required degree.

The development of fourth-order synthesis illustrates this distinction. Anderson et al.~\cite{AndersonEtAl1990} studied characterization and construction through finite-dimensional conditions. Betser and Zeheb~\cite{BetserZeheb1993} showed limitations of the proposed linear-programming condition. The existence result for stable fourth-order interval families was subsequently established by Yu and Wang~\cite{YuWangFourth2001}. The treatment of low-order segments and interval polynomials in~\cite{WangYu2001} provides further constructive results. The chapter by Yu and Wang~\cite{YuWang2008} presents these developments in a unified account of SPR regions and robust synthesis.

For polynomial segments, Yu and Wang~\cite{YuWangSixth2003} proved a sixth-order synthesis theorem. Arbitrary-order segment results were developed by Yu, Wang, and Ackermann~\cite{YuWangAckermann2004}. A segment depends on one interpolation parameter, whereas an interval polynomial family permits independent variation of all its uncertain coefficients. A numerator obtained for one segment need not work for the other segments in the same box. Consequently, the segment theorem does not by itself establish synthesis for a sixth-order interval family.

The analogous quintic interval-family result is established in a companion work~\cite{QuinticSPRProof}. This paper proves that robust Hurwitz stability is sufficient for the sextic interval problem. The proof constructs the numerator from the roots of the even endpoint polynomials and from two scales determined by the odd endpoints. The principal difficulty is to prove the resulting inequalities on the whole frequency half-axis without changing the numerator between intervals. A decomposition separates terms of known sign from two quadratic polynomials. Their dependence on the remaining ratio is concave, and the endpoint inequalities admit finite algebraic certificates. The final passage from nonnegative auxiliary inequalities to strict positivity uses the strict odd-part bounds and the root separation of the constructed numerator. A Lean formalization verifies the complete chain, including the coefficient certificates and the full-family quantifiers.

\section{Problem formulation and endpoint reduction}\label{sec:problem}
Let $\ell_j,u_j\in\R$, with $\ell_j\le u_j$ for $1\le j\le6$, and put
\begin{equation}\label{eq:family}
\K=\left\{p(s)=s^6+p_1s^5+p_2s^4+p_3s^3+p_4s^2+p_5s+p_6:
\ell_j\le p_j\le u_j,\ 1\le j\le6\right\}.
\end{equation}
A real polynomial is Hurwitz stable if all its zeros have negative real part. The family $\K$ is robustly Hurwitz stable if every $p\in\K$ is Hurwitz stable. Following the equal-degree formulation in~\cite{YuWang2008}, $\beta/p$ is called strictly positive real (SPR) when $p$ is Hurwitz stable, $\deg\beta=\deg p$, and
\begin{equation}\label{eq:spr}
\Ree\frac{\beta(i\omega)}{p(i\omega)}>0\qquad(\omega\in\R).
\end{equation}
In particular, the synthesis problem includes the Hurwitz condition on the given denominators, before any cancellation in the rational expression.

\begin{theorem}\label{thm:main}
Suppose that $\K$ in~\eqref{eq:family} is robustly Hurwitz stable. There exists one polynomial $\beta\in\R[s]$ of degree exactly six such that
\begin{equation}\label{eq:main}
\Ree\frac{\beta(i\omega)}{p(i\omega)}>0
\qquad\text{for every }p\in\K\text{ and every }\omega\in\R.
\end{equation}
The numerator can be chosen monic. Thus robust Hurwitz stability is necessary and sufficient for a common equal-degree SPR numerator for $\K$.
\end{theorem}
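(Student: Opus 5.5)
We will reduce the problem to frequency inequalities on the even and odd parts, and then build $\beta$ from the Kharitonov endpoints.

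\textbf{Even/odd split.} Write $p(i\omega)=p^e(\omega^2)+i\omega\,p^o(\omega^2)$ and $\beta(i\omega)=\beta^e(\omega^2)+i\omega\,\beta^o(\omega^2)$, where
\begin{align*}
p^e(x)&=-x^3+p_2x^2-p_4x+p_6, & p^o(x)&=x^2-p_3x+p_5\cdot 0+\cdots
\end{align*}
For the odd part we take the correct signs, $p^o(x)=p_1x^2-p_3x+p_5$. Then
\[
\Ree\frac{\beta(i\omega)}{p(i\omega)}>0
\iff
\beta^e(x)p^e(x)+x\,\beta^o(x)p^o(x)>0\quad(x=\omega^2\ge0).
\]
For fixed $\beta$, this expression is affine in each coefficient $p_j$. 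The coefficients $p_2,p_4,p_6$ enter only through $p^e$, and $p_1,p_3,p_5$ only through $p^o$.

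\textbf{Reduction to endpoint polynomials.} At each fixed $x$, the minimum over the box is attained at a vertex. We want the minimizing vertex to be independent of $x$ on intervals, so that the problem reduces to the two even Kharitonov polynomials $p^e_{\min},p^e_{\max}$ and the two odd ones $p^o_{\min},p^o_{\max}$. We will arrange $\beta$ so that $\beta^e$ and $\beta^o$ have controlled sign changes. The needed condition then becomes four inequalities:
\[
\beta^e p^e_{a}+x\beta^o p^o_{b}>0,\qquad a,b\in\{\min,\max\},
\]
where on each interval the relevant pair is chosen according to the signs of $\beta^e$ and $\beta^o$.

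\textbf{Using robust stability.} By Kharitonov's theorem, robust stability gives interlacing. The three positive roots of $p^e_{\min}$ and of $p^e_{\max}$ interlace with the two positive roots of $p^o_{\min}$ and of $p^o_{\max}$. Moreover, every member of the family lies in the corresponding interlacing boxes.

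\textbf{Construction of $\beta$.} Take $\beta$ monic of degree six, with $\beta^e$ a cubic in $x$ whose roots are placed inside the gaps between the roots of the even endpoints. For instance, each root can be a point between the corresponding roots of $p^e_{\min}$ and $p^e_{\max}$. Then $\beta^e p^e_{a}\ge0$ except on short intervals where the endpoint roots straddle. Take $\beta^o$ as a quadratic in $x$ with the same root-placement idea relative to $p^o_{\min},p^o_{\max}$. Finally, fix two positive scales $\lambda,\mu$, determined by the odd endpoints, multiplying $\beta^o$ and the lower-order part of $\beta^e$.

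On the intervals where $\beta^e p^e_a$ has a definite positive sign, the inequality follows once the odd term is nonnegative. On the straddle intervals the odd term $x\beta^o p^o_b$ must dominate. Strict interlacing gives a positive lower bound for it there, because the odd endpoints have no roots near the even endpoint roots.

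\textbf{Main obstacle.} The hardest step is to prove the four inequalities on the whole half-axis with one fixed choice of roots and scales. I would parametrize the construction by the remaining ratio of scales. I would then show that each endpoint inequality, after removing terms of known sign, reduces to two quadratic polynomials in $x$ whose dependence on that ratio is concave. It then suffices to check the extreme values of the ratio, where the conditions reduce to explicit polynomial nonnegativity that can be certified by sum-of-squares or interval-sign certificates.

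\textbf{Strict positivity and degree.} This last step yields only $\ge0$. Strict positivity then follows from two facts: the strict odd-part bounds, $p^o>0$ on the relevant intervals, and the fact that the constructed $\beta^e$ and $\beta^o$ have no common root with the endpoint polynomials. Together these rule out equality. Monicity gives degree exactly six, and the behaviour as $\omega\to\infty$ is fine because the leading ratio is $1$. Necessity is immediate, since the SPR definition requires every $p\in\K$ to be Hurwitz stable.
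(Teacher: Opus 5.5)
Your outline follows the paper's overall strategy: four-corner reduction, interlacing, two odd scales, concavity in a ratio, and endpoint certificates. However, the step that carries the whole theorem is only asserted. You never fix the numerator. \emph{Roots between corresponding endpoint roots}, the two scales and \emph{the remaining ratio} are left undefined, so there is no concrete polynomial whose concavity or certificate could be checked.

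The domination heuristic does not close by itself. The odd scale must be large enough for $x\beta^o p^o$ to beat the negative part of $\beta^e p^e$ on $[a,A]$, $[B,b]$, $[c,C]$. It must also be small enough for $\beta^e p^e$ to beat the negative part of $x\beta^o p^o$ between the roots of the odd endpoints. These intervals need not be short. A positive lower bound on one side says nothing about whether the admissible window for the scale is nonempty, and that quantitative compatibility is precisely the content of the theorem.

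The paper resolves it with a specific construction: $\beta(s)=hF(-s^2)+sG(-s^2)$ with $F=mF_0+mf_2F_z$ and $G=U+mV+mf_2G_z$. These are interpolated so that the auxiliary polynomials $P_{ij}=FE_i+tGH_j$ have double zeros at $c,b,A,B$ (Proposition~\ref{prop:contacts}). Its even part is not sign-matched: the last root of $F$ lies beyond $C$, so $hFE_p<0$ on $(C,r_3)$ for every member. Nonnegativity of the $P_{ij}$ then needs three further ingredients:
\begin{itemize}
\item the band certificates of Lemma~\ref{lem:local};
\item concavity in $m\in[1,M]$ from the explicit $m^2$-coefficients, which uses $q\le0$, a consequence of the width inequalities rather than of root order alone;
\item the 21-coefficient certificates at $m=M$ in the width coordinates (Proposition~\ref{prop:core}).
\end{itemize}
Saying the conditions \emph{can be certified} replaces all of this without identifying what is certified.

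Your strictness step is also invalid as stated. The pairs $(E_\pm,O_\pm)$ are the even and odd parts of actual members of $\K$. If you prove only $\ge0$ for them, an equality at some $x$ is a genuine failure of SPR. \emph{No common root} cannot exclude it, since two nonzero terms of opposite sign can cancel. The margin has to be built in before the nonnegativity step.

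The paper does this by proving nonnegativity for auxiliary odd quadratics lying strictly outside the family, $hU<O_-\le O_+<hmV$ on $t\ge0$ (Lemma~\ref{lem:bounds}). This yields strict inequality wherever $G(t)\ne0$. At the two zeros of $G$, in $(A,B)$ and $(b,c)$, strictness comes from $FE_p>0$ via the root locations of $F$ (Proposition~\ref{prop:strict}).

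Two smaller points remain. The degenerate case $E_-=E_+$, where the straddle intervals collapse and the strict root order fails, needs separate treatment (Lemma~\ref{lem:fixed}). Also, rescaling only the lower-order part of $\beta^e$ moves its roots, which conflicts with your prescribed root placement.
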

The necessity follows from the definition. The proof of sufficiency occupies the following sections. No strict inequality between $\ell_j$ and $u_j$ is assumed.

For $t=\omega^2$, write
\begin{equation}\label{eq:parity}
E_p(t)=-t^3+p_2t^2-p_4t+p_6,\qquad
O_p(t)=p_1t^2-p_3t+p_5.
\end{equation}
Then $p(i\omega)=E_p(t)+i\omega O_p(t)$. The endpoint polynomials are
\begin{align}
E_-(t)&=-t^3+\ell_2t^2-u_4t+\ell_6,
& E_+(t)&=-t^3+u_2t^2-\ell_4t+u_6,\label{eq:evenends}\\
O_-(t)&=\ell_1t^2-u_3t+\ell_5,
& O_+(t)&=u_1t^2-\ell_3t+u_5.\label{eq:oddends}
\end{align}
For every $p\in\K$ and $t\ge0$, these satisfy $E_-(t)\le E_p(t)\le E_+(t)$ and $O_-(t)\le O_p(t)\le O_+(t)$. Each of the four combinations $(E_\pm,O_\pm)$ comes from a member of $\K$.

\begin{lemma}\label{lem:rectangle}
Let $\K$ be robustly Hurwitz stable and let $\beta(s)=F(-s^2)+sG(-s^2)$ be real. For any $t\ge0$, the inequality
$F(t)E_p(t)+tG(t)O_p(t)>0$ holds for every $p\in\K$ if and only if it holds for the four combinations $(E_\pm,O_\pm)$. The same equivalence holds with $\ge0$ in place of $>0$.
\end{lemma}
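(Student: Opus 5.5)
The plan is to fix $t\ge0$ and treat the left-hand side as a bilinear function of the pair $(E_p(t),O_p(t))$. The coefficients $F(t)$ and $tG(t)$ do not depend on $p$. The aim is to show that, as $p$ ranges over $\K$, the point $(E_p(t),O_p(t))$ sweeps out exactly the rectangle
\[
R_t=[E_-(t),E_+(t)]\times[O_-(t),O_+(t)].
\]
The inequality then concerns a linear (affine) functional over that rectangle. A linear function $\Lambda(x,y)=F(t)x+tG(t)y$ on a rectangle attains its minimum at a vertex. Hence $\Lambda>0$ (or $\ge0$) on all of $R_t$ if and only if it holds at the four corners $(E_\pm(t),O_\pm(t))$.

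\textbf{Step 1: containment.} I will verify that $(E_p(t),O_p(t))\in R_t$ for every $p\in\K$. This is the bound recorded after \eqref{eq:oddends}. In $E_p(t)=-t^3+p_2t^2-p_4t+p_6$, the coefficients of $p_2$ and $p_6$ are $t^2,1\ge0$ and the coefficient of $p_4$ is $-t\le0$. Taking $\ell_2,u_4,\ell_6$ therefore gives the minimum and $u_2,\ell_4,u_6$ the maximum. The odd part $O_p$ is handled the same way.

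\textbf{Step 2: the corners are attained.} Each corner comes from an actual member of $\K$. For instance, $(E_-,O_+)$ comes from $p=s^6+u_1s^5+\ell_2s^4+\ell_3s^3+u_4s^2+u_5s+\ell_6$. The even and odd parts involve disjoint sets of coefficients, so the two choices are independent. This gives the ``only if'' direction immediately, since the four combinations are themselves instances of $p\in\K$.

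\textbf{Step 3: the ``if'' direction.} For $p\in\K$, write
\[
E_p(t)=(1-\lambda)E_-(t)+\lambda E_+(t),\qquad O_p(t)=(1-\mu)O_-(t)+\mu O_+(t)
\]
with $\lambda,\mu\in[0,1]$. If an endpoint pair coincides, any value of $\lambda$ or $\mu$ works; this covers the allowed case $\ell_j=u_j$. Bilinearity then gives
\[
\Lambda(E_p,O_p)=\sum(\text{convex weights})\cdot\Lambda(E_\pm,O_\pm).
\]
A convex combination of four positive (respectively nonnegative) numbers is positive (respectively nonnegative), because the weights are nonnegative and sum to $1$.

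I do not expect a real obstacle here. The only point needing care is that no strictness of $\ell_j<u_j$ is assumed, and the convex-combination argument handles degenerate intervals uniformly. The robust Hurwitz hypothesis is not actually used for this equivalence; it is carried in the statement only for context.
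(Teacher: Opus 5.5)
Your argument is correct and is essentially the paper's proof: the paper simply notes that the expression is affine in each endpoint component and that the four corners are attained by members of $\K$, which is what your containment, corner-attainment, and convex-combination steps spell out. The paper additionally records the identity \eqref{eq:pairing}, whose denominator is positive by Hurwitz stability; that is the only place the robust stability hypothesis enters, which agrees with your remark that the equivalence itself does not need it.
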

\begin{proof}
The expression is affine in each endpoint component, and the four corners are attained. Moreover,
\begin{equation}\label{eq:pairing}
\Ree\frac{\beta(i\omega)}{p(i\omega)}
=\frac{F(t)E_p(t)+tG(t)O_p(t)}{E_p(t)^2+tO_p(t)^2},
\end{equation}
whose denominator is positive because $p$ is Hurwitz stable.
\end{proof}

We first record the construction when the even component is fixed. It also covers families consisting of a single polynomial.
\begin{lemma}\label{lem:fixed}
If $\K$ is robustly Hurwitz stable and $E_-=E_+=E$, then, for all sufficiently small $\varepsilon>0$,
\begin{equation}\label{eq:fixed}
\beta_\varepsilon(s)=E(-s^2)-\varepsilon sE'(-s^2)
\end{equation}
is a monic common SPR numerator of degree six.
\end{lemma}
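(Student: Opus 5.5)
By Lemma~\ref{lem:rectangle}, it suffices to work with the two corner polynomials $E+i\omega O_\pm$, which are members of $\K$. Write $\beta_\varepsilon(s)=F(-s^2)+sG(-s^2)$ with $F=E$ and $G=-\varepsilon E'$. Since $E(-s^2)=s^6+\cdots$ and $sE'(-s^2)$ has degree five, $\beta_\varepsilon$ is monic of degree exactly six for every $\varepsilon$. By \eqref{eq:pairing}, the claim reduces to showing that for all small $\varepsilon>0$
\[
\Phi_\pm(t):=E(t)^2-\varepsilon\, tE'(t)O_\pm(t)>0\qquad(t\ge0),
\]
for both choices of sign. The plan is a compactness argument, with the sign information coming from the Hermite--Biehler interlacing property.

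\textbf{Step 1 (interlacing).} Apply the Hermite--Biehler theorem to the two Hurwitz polynomials $p_\pm(s)=E(-s^2)+sO_\pm(-s^2)$, written in the variable $t=\omega^2$. All coefficients of a Hurwitz polynomial are positive, so $E(0)=p_6>0$ and $O_\pm(0)>0$. The theorem then says that $E$ has three simple positive roots $0<t_1<t_2<t_3$, and each $O_\pm$ has two positive roots $s_1^\pm<s_2^\pm$ satisfying $t_1<s_1^\pm<t_2<s_2^\pm<t_3$. Since the roots of $E$ are simple and $E$ has leading coefficient $-1$, we get $E'(t_1)<0$, $E'(t_2)>0$ and $E'(t_3)<0$. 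Interlacing with $O_\pm(0)>0$ gives $O_\pm(t_1)>0$, $O_\pm(t_2)<0$ and $O_\pm(t_3)>0$. Hence
\[
-t_kE'(t_k)O_\pm(t_k)>0\qquad(k=1,2,3),
\]
for both signs.

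\textbf{Step 2 (neighbourhoods of the roots).} By continuity, choose $\delta>0$ so that $-tE'(t)O_\pm(t)>0$ on each interval $[t_k-\delta,t_k+\delta]\subset(0,\infty)$, for both signs. On these intervals $\Phi_\pm\ge-\varepsilon tE'O_\pm>0$ for every $\varepsilon>0$, so no smallness is needed there.

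\textbf{Step 3 (large $t$).} Because $E^2$ has degree six while $tE'O_\pm$ has degree at most five, there is $T>t_3+\delta$ such that $|tE'(t)O_\pm(t)|\le E(t)^2$ for all $t\ge T$. Then $\Phi_\pm(t)\ge(1-\varepsilon)E(t)^2>0$ on $[T,\infty)$ whenever $\varepsilon<1$.

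\textbf{Step 4 (the remaining compact set).} Let $C$ be $[0,T]$ with the open $\delta$-neighbourhoods of the $t_k$ removed. On $C$ the continuous function $E^2$ has a positive minimum $m$, and $|tE'O_\pm|$ is bounded by some $M$ for both signs. Taking $\varepsilon<\min(1,m/M)$ gives $\Phi_\pm>0$ on $C$; in particular this covers $t=0$, where $\Phi_\pm(0)=E(0)^2>0$.

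Combining Steps 2--4 yields the strict inequality for all $t\ge0$. Lemma~\ref{lem:rectangle} then extends it to every $p\in\K$, which gives \eqref{eq:main} for $\beta_\varepsilon$. The only substantive step is Step~1, specifically extracting the correct sign of $E'O_\pm$ at the roots of $E$ from interlacing. I would make this rigorous by counting sign changes of $O_\pm$ on $(0,t_k)$. The rest is routine compactness, and it requires only finitely many (two) odd polynomials, so the bound on $\varepsilon$ is uniform over the whole family.
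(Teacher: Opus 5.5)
Your proposal is correct and follows essentially the same route as the paper's proof. Hermite--Biehler interlacing gives $-t_kE'(t_k)O_\pm(t_k)>0$ at the three simple positive roots of $E$, which handles fixed neighbourhoods of those roots. A small-$\varepsilon$ compactness argument covers the rest of a bounded interval, and large $t$ is treated separately. The differences are minor: you reduce to the two corner odd parts $O_\pm$ via Lemma~\ref{lem:rectangle} rather than invoking uniformity over the compact coefficient box, and you handle large $t$ by degree domination of $E^2$ rather than by the signs of $-E'$ and $O_p$.
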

\begin{proof}
The Hermite--Biehler interlacing criterion~\cite{Holtz2003} gives three simple positive zeros $r_1<r_2<r_3$ of $E$. At each zero, $-E'(r_j)O_p(r_j)>0$ for every $p\in\K$. The pairing in~\eqref{eq:pairing} becomes
\begin{equation}\label{eq:fixedpair}
E(t)^2-\varepsilon tE'(t)O_p(t).
\end{equation}
At the three zeros of $E$, the coefficient of $\varepsilon$ is positive uniformly over the compact coefficient box. It remains positive on fixed neighborhoods of these zeros. Outside those neighborhoods, on any fixed compact subinterval of $[0,\infty)$, $E^2$ has a positive lower bound and the second term is uniformly bounded. A sufficiently small common $\varepsilon$ therefore gives positivity there. For all sufficiently large $t$, both $-E'(t)$ and $O_p(t)$ are positive uniformly in $p$; at $t=0$, $E(0)^2>0$. These observations cover the whole half-axis with a single choice of $\varepsilon$. The leading coefficient of $E(-s^2)$ is one, and the added term has degree at most five.
\end{proof}

\section{Root geometry and quadratic bounds}\label{sec:geometry}
For the remainder of the construction, assume $E_-\ne E_+$. All parameters in this and the next three sections are determined by the same robustly Hurwitz-stable family $\K$.

\begin{lemma}\label{lem:geometry}
The even endpoints factor as
\begin{equation}\label{eq:roots}
E_-(t)=-(t-a)(t-b)(t-c),\qquad
E_+(t)=-(t-A)(t-B)(t-C),
\end{equation}
where
\begin{equation}\label{eq:order}
0<a<A<B<b<c<C.
\end{equation}
Every odd component $O_p$ has positive leading coefficient and has one zero in $(A,B)$ and the other in $(b,c)$.
\end{lemma}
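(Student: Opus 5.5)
The plan is to apply the Hermite--Biehler interlacing criterion~\cite{Holtz2003}, already used in the proof of Lemma~\ref{lem:fixed}, to suitable corner members of $\K$. I then combine the interlacing data with the pointwise ordering $E_-<E_+$ on $(0,\infty)$.

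First, the inputs. For a monic Hurwitz $p$ of degree six, all coefficients are positive, so $p_1>0$ and $O_p$ has positive leading coefficient. The criterion then gives three simple positive zeros $e_1<e_2<e_3$ of $E_p$ and two positive zeros $o_1<o_2$ of $O_p$ with
\[
e_1<o_1<e_2<o_2<e_3 .
\]
Since the even and odd coefficients of $\K$ vary independently, for any $p\in\K$ the two polynomials with parts $(E_-,O_p)$ and $(E_+,O_p)$ also lie in $\K$. Both are therefore Hurwitz. Writing $z_1<z_2$ for the zeros of $O_p$, this yields the factorizations~\eqref{eq:roots} with positive simple roots $a<b<c$ and $A<B<C$, together with
\[
a<z_1<b<z_2<c,\qquad A<z_1<B<z_2<C.
\]

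Second, the ordering of the endpoints. The difference
\[
E_+(t)-E_-(t)=(u_2-\ell_2)t^2+(u_4-\ell_4)t+(u_6-\ell_6)
\]
has nonnegative coefficients and is not identically zero, since $E_-\ne E_+$. Hence $E_+>E_-$ on $(0,\infty)$. The sign pattern of $-(t-A)(t-B)(t-C)$ is positive on $(0,A)$ and on $(B,C)$, and negative on $(A,B)$ and on $(C,\infty)$; the analogous pattern holds for $E_-$.

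Third, locating each root in turn:
\begin{itemize}[nosep]
\item $E_+(a)>E_-(a)=0$ puts $a$ in $(0,A)\cup(B,C)$. Since $a<z_1<B$, this forces $a<A$.
\item $E_+(b)>0$ puts $b$ in $(0,A)\cup(B,C)$. Since $b>z_1>A$, we get $B<b$ (and $b<C$).
\item $E_-(C)<E_+(C)=0$ puts $C$ in $(a,b)\cup(c,\infty)$. Since $C>z_2>b$, this forces $C>c$.
\end{itemize}
Together with $A<B$ and $b<c$, these give~\eqref{eq:order}. Finally, $z_1\in(a,b)\cap(A,B)=(A,B)$ and $z_2\in(b,c)\cap(B,C)=(b,c)$, which is the claimed location of the zeros of $O_p$.

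The only delicate point is the step that uses independence of even and odd coefficients to pair one $O_p$ with both $E_-$ and $E_+$. That step is what couples the two interlacing chains through the common zeros $z_1,z_2$. After it, the argument is sign bookkeeping.
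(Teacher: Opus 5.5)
Your proof is correct and follows essentially the same route as the paper. Both arguments combine Hermite--Biehler interlacing for members pairing one odd component $O_p$ with both even endpoints, the strict ordering $E_-<E_+$ on $(0,\infty)$, and sign comparisons at roots to get $a<A$, $B<b$, $c<C$, and then intersect the interlacing intervals; your sign bookkeeping spells out what the paper compresses into ``comparing the endpoints on each side of the two common odd zeros.''
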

\begin{proof}
For any Hurwitz-stable monic sextic, its three even zeros and two odd zeros are positive, simple, and alternate, beginning and ending with an even zero. This is the degree-six form of the Hermite--Biehler criterion. Each admissible odd component can be combined with either even endpoint, since the coefficients vary independently. Its first zero therefore lies between the first two zeros of both even endpoints, and its second zero lies between their last two zeros.

On $t>0$,
\begin{equation}\label{eq:widthpoly}
E_+(t)-E_-(t)=w_1t^2+w_2t+w_3>0,
\end{equation}
where
\begin{equation}\label{eq:widths}
w_1=u_2-\ell_2,\qquad w_2=u_4-\ell_4,\qquad w_3=u_6-\ell_6.
\end{equation}
The strict sign follows because at least one width is positive. At an even zero, the sign change is fixed by interlacing. Comparing the endpoints on each side of the two common odd zeros gives $a<A$, $B<b$, and $c<C$, with the remaining inequalities in~\eqref{eq:order} supplied by the two odd zeros. Intersecting their interlacing intervals gives $(A,B)$ and $(b,c)$.
\end{proof}

Put
\begin{equation}\label{eq:UV}
U(t)=(t-A)(t-c),\qquad V(t)=(t-B)(t-b).
\end{equation}
The following bounds replace the varying odd components by two particularly simple quadratics.
\begin{lemma}\label{lem:bounds}
Define
\begin{equation}\label{eq:scales}
h=\frac{\sqrt{u_3^2-4\ell_1\ell_5}}{c-A},\qquad
k=\frac{\sqrt{\ell_3^2-4u_1u_5}}{b-B},\qquad m=\frac{k}{h}.
\end{equation}
Then $h>0$, and
\begin{equation}\label{eq:bounds}
hU(t)<O_-(t)\le O_+(t)<kV(t)\qquad(t\ge0).
\end{equation}
Furthermore,
\begin{equation}\label{eq:mrange}
1<m\le M:=\frac{c-A}{b-B}.
\end{equation}
\end{lemma}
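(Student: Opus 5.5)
The plan is to prove each of the three claims in~\eqref{eq:bounds}--\eqref{eq:mrange} separately. All of them rest on Lemma~\ref{lem:geometry} and on an exact computation for the difference of two quadratics written in centred form.

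\emph{Setup and $h>0$.} Apply Lemma~\ref{lem:geometry} to the two odd endpoints, each of which is the odd component of a member of $\K$. This gives $\ell_1>0$ and $u_1>0$. It also gives the factorizations $O_-(t)=\ell_1(t-r_1)(t-r_2)$ and $O_+(t)=u_1(t-s_1)(t-s_2)$, with $r_1,s_1\in(A,B)$ and $r_2,s_2\in(b,c)$. Since the roots are real and distinct, $u_3^2-4\ell_1\ell_5=\ell_1^2(r_2-r_1)^2>0$ and $\ell_3^2-4u_1u_5=u_1^2(s_2-s_1)^2>0$. Hence
\[
h=\ell_1\frac{r_2-r_1}{c-A}>0,\qquad k=u_1\frac{s_2-s_1}{b-B}.
\]
The order~\eqref{eq:order} gives the strict containments $[r_1,r_2]\subset(A,c)$ and $[B,b]\subset(s_1,s_2)$. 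From these, $r_2-r_1<c-A$ and $s_2-s_1>b-B$, so $h<\ell_1$ and $k>u_1$.

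\emph{The outer inequalities.} Write
\[
O_-(t)=\ell_1\bigl((t-p)^2-\delta^2\bigr),\qquad U(t)=(t-q)^2-\Delta^2,
\]
where $p,q$ are the midpoints and $\delta=(r_2-r_1)/2$, $\Delta=(c-A)/2$, so that $h\Delta=\ell_1\delta$. The difference $D=O_--hU$ has positive leading coefficient $\ell_1-h$.
\begin{itemize}[nosep]
\item The minimum over $t\in\R$ of $\ell_1(t-p)^2-h(t-q)^2$ equals $-\ell_1h(p-q)^2/(\ell_1-h)$.
\item The constant part is $h\Delta^2-\ell_1\delta^2=\ell_1\delta(\Delta-\delta)$.
\item Using $h/(\ell_1-h)=\delta/(\Delta-\delta)$, positivity of $\min D$ reduces to $(p-q)^2<(\Delta-\delta)^2$.
\end{itemize}
The last condition is precisely the strict containment $[r_1,r_2]\subset(A,c)$. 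So $hU<O_-$ holds on all of $\R$, not only on $t\ge0$.

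The inequality $O_+<kV$ is proved by the same computation with the roles of the two quadratics exchanged. Here $kV-O_+$ has leading coefficient $k-u_1>0$, and the containment used is $[B,b]\subset(s_1,s_2)$. The middle inequality $O_-\le O_+$ on $t\ge0$ was already recorded after~\eqref{eq:oddends}.

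\emph{The range of $m$.} The lower bound is immediate: $m>1$ follows from $k>u_1\ge\ell_1>h$. For the upper bound, $m\le M$ is equivalent to $\sqrt{\ell_3^2-4u_1u_5}\le\sqrt{u_3^2-4\ell_1\ell_5}$. This follows from three facts:
\begin{itemize}[nosep]
\item $0<\ell_3\le u_3$, where positivity holds because the roots of $O_+$ are positive, so $\ell_3=u_1(s_1+s_2)>0$;
\item $0<\ell_5\le u_5$, since $\ell_5=\ell_1r_1r_2>0$;
\item $\ell_1\ell_5\le u_1u_5$, which follows from the two previous bounds together with $0<\ell_1\le u_1$.
\end{itemize}

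The only step with real content is the outer-inequality computation, specifically verifying the identity for $\min D$ and getting the direction of the reduction right. The rest is bookkeeping from Lemma~\ref{lem:geometry}.
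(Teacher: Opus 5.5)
Your proposal is correct and follows essentially the same route as the paper: the paper writes the centred-quadratic comparison as an explicit completed-square identity with scale ratio $r/R$, whereas you reach the same positivity condition $(p-q)^2<(\Delta-\delta)^2$ by computing the vertex value of $O_- - hU$. The remaining steps are also the paper's: $h<\ell_1$ and $k>u_1$ from the strict root containments, and $m\le M$ from comparing the two discriminants using positive coefficient endpoints.
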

\begin{proof}
If two monic quadratics have root intervals with centers $n,N$ and half-lengths $r,R$, where $|n-N|+r<R$, then
\begin{align}\label{eq:quadraticgap}
&((t-n)^2-r^2)-\frac rR((t-N)^2-R^2)\notag\\
&\quad=\frac{R-r}{R}\left(t-N-\frac{R(n-N)}{R-r}\right)^2
+\frac{r\bigl((R-r)^2-(n-N)^2\bigr)}{R-r}>0.
\end{align}
Apply this identity first to the root interval of $O_-$ inside $(A,c)$, and then to $(B,b)$ inside the root interval of $O_+$. Multiplication by the positive leading coefficients gives~\eqref{eq:bounds}, in fact for every real $t$.

The first scale satisfies $h<\ell_1$ and the second $k>u_1$, so $m>1$. All coefficient endpoints are positive, and
$\ell_3^2-4u_1u_5\le u_3^2-4\ell_1\ell_5$.
Substitution in~\eqref{eq:scales} gives the upper bound in~\eqref{eq:mrange}.
\end{proof}
The construction below uses $E_-,E_+,U,mV$. The strict inequalities in~\eqref{eq:bounds} will be used only after nonnegativity has been proved for these auxiliary endpoints.

\section{Construction of a common numerator}\label{sec:construction}
Set
\begin{equation}\label{eq:D}
D_1=(b-a)(c-a),\qquad D_2=(C-A)(C-B),
\end{equation}
and define two linear polynomials by
\begin{align}
L_\alpha(t)&=-\left(\frac b{(b-a)(c-b)}+\frac B{(C-B)(B-A)}\right)t
+Bb\left(\frac1{(b-a)(c-b)}+\frac1{(C-B)(B-A)}\right),\label{eq:la}\\
L_\gamma(t)&=\left(\frac c{(c-a)(c-b)}+\frac A{(C-A)(B-A)}\right)t
-Ac\left(\frac1{(c-a)(c-b)}+\frac1{(C-A)(B-A)}\right).\label{eq:lg}
\end{align}
The baseline cubic is $F_0=UL_\alpha+VL_\gamma$. Its coefficients needed below have the short expressions
\begin{align}
F_0(0)&=ABbc\left(\frac1{D_1}+\frac1{D_2}\right)>0,\label{eq:fzero}\\
f_2:=[t^2]F_0&=1+\frac{a(A+B+b+c-a)}{D_1}
+\frac{AB+C(b+c)}{D_2}>0,\label{eq:fsecond}\\
[t^3]F_0&=-\left(\frac a{D_1}+\frac C{D_2}\right)<0.\label{eq:fleading}
\end{align}
The correction is determined by the four weights
\begin{align}
w_A&=\frac{C-A}{(b-A)(c-A)^2},
& w_B&=\frac{C-B}{m(b-B)^2(c-B)},\label{eq:weights1}\\
w_b&=\frac{b-a}{m(b-A)(b-B)^2},
& w_c&=\frac{c-a}{(c-A)^2(c-B)}.\label{eq:weights2}
\end{align}
Put
\begin{equation}\label{eq:rho}
\rho=\frac{Aw_A+Bw_B+bw_b+cw_c}{w_A+w_B+w_b+w_c},\qquad F_z(t)=t(t-\rho).
\end{equation}
Let $G_z$ be the quadratic interpolant at $A,B,b$ with values
\begin{align}
v_A&=\frac{(A-\rho)(A-B)(A-C)}{A-c},\label{eq:vA}\\
v_B&=\frac{(B-\rho)(B-A)(B-C)}{m(B-b)},
&v_b&=\frac{(b-\rho)(b-a)(b-c)}{m(b-B)}.\label{eq:vBb}
\end{align}
Thus, explicitly,
\begin{equation}\label{eq:Gz}
G_z(t)=\frac{v_A(t-B)(t-b)}{(A-B)(A-b)}
+\frac{v_B(t-A)(t-b)}{(B-A)(B-b)}
+\frac{v_b(t-A)(t-B)}{(b-A)(b-B)}.
\end{equation}
The proposed numerator is
\begin{equation}\label{eq:candidate}
\beta(s)=hF(-s^2)+sG(-s^2),
\end{equation}
where
\begin{equation}\label{eq:FG}
F=mF_0+mf_2F_z,\qquad G=U+mV+mf_2G_z.
\end{equation}
All these quantities depend only on $\K$.

For $i,j\in\{0,1\}$, use $E_0=E_-$, $E_1=E_+$, $H_0=U$, $H_1=mV$, and put
\begin{equation}\label{eq:P}
P_{ij}(t)=F(t)E_i(t)+tG(t)H_j(t).
\end{equation}
The interpolation is chosen to give double zeros of these four polynomials at prescribed nodes.

\begin{proposition}\label{prop:contacts}
The construction satisfies $B<\rho<b$ and
\begin{equation}\label{eq:contactfactor}
P_{ij}(t)=(t-\xi_{ij})^2Q_{ij}(t),\qquad
\xi_{00}=c,\ \xi_{01}=b,\ \xi_{10}=A,\ \xi_{11}=B,
\end{equation}
where each $Q_{ij}$ is a polynomial of degree at most four. Also $F(0)>0$, $[t^3]F<0$, and $[t^2]G>0$. The polynomial $G$ has two simple roots $u,v$ with
\begin{equation}\label{eq:Groots}
A<u<B<b<v<c.
\end{equation}
The polynomial $F$ has three simple roots $r_1,r_2,r_3$ satisfying
\begin{equation}\label{eq:Froots}
0<r_1<A,\qquad B<r_2<b,\qquad C<r_3.
\end{equation}
\end{proposition}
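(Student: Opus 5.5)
The plan is to reduce everything to the values of $F$, $G$ and their contact conditions at the four nodes $A,B,b,c$, and then read off root locations from sign patterns.

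\emph{Step 1: degrees and the contact conditions.} $F$ is a cubic and $E_i$ are cubics, while $tGH_j$ has degree five. So each $P_{ij}$ has degree at most six, and a double zero at $\xi_{ij}$ leaves a quotient $Q_{ij}$ of degree at most four. At each node, the relevant $E_i$ and $H_j$ both vanish:
\begin{itemize}[nosep]
\item $E_-(c)=U(c)=0$;
\item $E_-(b)=V(b)=0$;
\item $E_+(A)=U(A)=0$;
\item $E_+(B)=V(B)=0$.
\end{itemize}
Hence $P_{ij}(\xi_{ij})=0$ automatically. The double zero is then equivalent to the single linear condition
\[
F(\xi)E_i'(\xi)+\xi\,G(\xi)H_j'(\xi)=0 .
\]
I will check this in two layers, using linearity in $(F,G)$.
\begin{itemize}[nosep]
\item \emph{Baseline pair $(mF_0,\,U+mV)$.} At each node, exactly one of $U,V$ survives in $F_0=UL_\alpha+VL_\gamma$ and in $G_0=U+mV$. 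Dividing out the common nonzero factor leaves an identity for $L_\alpha$ or $L_\gamma$ at that node. The coefficients in \eqref{eq:la}--\eqref{eq:lg} are exactly the partial-fraction expressions of $E_\pm'$ at their roots. Note that $m$ cancels because both $F$ and the $V$-part of $G$ carry it.
\item \emph{Correction pair $(F_z,G_z)$.} This pair must satisfy the homogeneous conditions. At $A,B,b$, the values \eqref{eq:vA}--\eqref{eq:vBb} are precisely $-F_z(\xi)E_i'(\xi)/(\xi H_j'(\xi))$, using $F_z(\xi)/\xi=\xi-\rho$. At $c$, the value of the quadratic $G_z$ is determined by the other three nodes. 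Equating it to the required value gives a linear equation in $\rho$, and the weights \eqref{eq:weights1}--\eqref{eq:weights2} are the coefficients of that equation, so \eqref{eq:rho} solves it.
\end{itemize}

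\emph{Step 2: the location of $\rho$, and the coefficient signs.} Since $\rho$ is a weighted mean of $A<B<b<c$ with positive weights, the claim $B<\rho<b$ is equivalent to two inequalities:
\[
w_A(B-A)<w_b(b-B)+w_c(c-B),\qquad w_c(c-b)<w_A(b-A)+w_B(b-B).
\]
I expect these to follow from \eqref{eq:order} together with $1<m\le M$ from Lemma~\ref{lem:bounds}. The $1/m$ factors in $w_B,w_b$ are where $M$ enters.

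The coefficient signs come next.
\begin{itemize}[nosep]
\item $F(0)=mF_0(0)>0$, because $F_z(0)=0$.
\item $[t^3]F=m[t^3]F_0<0$, by \eqref{eq:fleading}.
\item $[t^2]G=1+m+mf_2[t^2]G_z$. Here $[t^2]G_z$ is the divided difference of the values $v_A,v_B,v_b$. This needs a lower bound; see the main obstacle below.
\end{itemize}

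\emph{Step 3: the roots, by signs.} Using $B<\rho<b$, direct sign counts give $v_A>0$, $v_B<0$ and $v_b<0$. Since $V(A)>0$ and $U(B),U(b)<0$, this yields
\[
G(A)>0,\qquad G(B)<0,\qquad G(b)<0 .
\]
Next I use the contact conditions to transfer signs to $F$. The relevant derivative signs are $E_+'(A)<0$, $E_+'(B)>0$ and $E_-'(b)>0$, together with the signs of $U'(A)$, $V'(B)$ and $V'(b)$. These give
\[
F(A)<0,\qquad F(B)<0,\qquad F(b)>0 .
\]
With $F(0)>0$, this produces $r_1\in(0,A)$ and $r_2\in(B,b)$.

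For $r_3$, I will show $F(C)>0$ directly, from $F_z(C)=C(C-\rho)>0$ and a sign computation of $F_0(C)=U(C)L_\alpha(C)+V(C)L_\gamma(C)$. Since the leading coefficient of $F$ is negative, the third root then satisfies $r_3>C$, and all three roots are simple. Consequently $F(c)>0$. The contact condition at $c$, with $E_-'(c)<0$ and $U'(c)>0$, then gives $G(c)>0$. Because $G$ is a quadratic with positive leading coefficient, the signs $+,-,-,+$ at $A,B,b,c$ place its two simple roots as $A<u<B$ and $b<v<c$.

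\emph{Main obstacle.} I expect the hardest steps to be the three genuinely quantitative inequalities:
\begin{itemize}[nosep]
\item $B<\rho<b$;
\item $F(C)>0$;
\item $[t^2]G>0$.
\end{itemize}
For each of them I would first clear all denominators and substitute gap variables $A=a+x_1$, $B=A+x_2,\dots$. I would then try to exhibit the resulting polynomial as having all positive coefficients, using $m\le M$ where necessary, in the spirit of the finite certificates mentioned in the introduction.
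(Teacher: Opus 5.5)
Your Steps 1 and 3 (the contact conditions and the sign transfers giving $F(A)<0$, $F(B)<0$, $F(b)>0$) agree with the paper. The gap is in Step~2 and in your plan for the quantitative inequalities. You expect $B<\rho<b$, and by extension $F(C)>0$, to follow from the root order \eqref{eq:order} together with $1<m\le M$, via positive-coefficient certificates in the gap variables alone. This cannot work, because both inequalities are false for ordered roots in general.

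\emph{Failure of $\rho<b$.} Take $a=1$, $A=100$, $B=101$, $b=102$, $c=103$, $C=104$, so $M=3$. Then $\rho\approx102.06$ at $m=1$ and $\rho\approx102.20$ at $m=3$. Since $\rho$ is a M\"obius function of $1/m$, this gives $\rho>b$ for every $m\in[1,M]$.

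\emph{Failure of $\rho>B$ and of $F(C)>0$.} Fix $a>0$, $A,B,b,c$ and let $C\to\infty$. The weights $w_A,w_B$ grow linearly in $C$ while $w_b,w_c$ stay fixed, so $\rho$ tends to a mean of $A$ and $B$ and eventually drops below $B$. Moreover $D_1D_2\bigl(F_0(C)+f_2C(C-\rho)\bigr)\sim -aC^5$, so $F(C)<0$ and the third root of $F$ lies below $C$.

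\emph{The missing ingredient.} The root configuration is not free. The quantities $w_1=A+B+C-a-b-c$, $w_2=ab+ac+bc-AB-AC-BC$ and $w_3=ABC-abc$ are the coefficient widths $u_2-\ell_2$, $u_4-\ell_4$, $u_6-\ell_6$, hence nonnegative. In gap form they give $\kappa=d+r-f\ge0$ and $\sigma=w_2+2aw_1\ge0$. These force $d\le f$ and bound $(2d+e)r$ by $f(f+g)+ef$. Those two bounds are exactly what control the two sides of $B<\rho<b$, which the paper proves from the widths together with $m\le M$.

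\emph{The last root.} The paper first uses $\rho<b$ to bound $F(C)=m\bigl(F_0(C)+f_2C(C-\rho)\bigr)$ below by the $m$-free quantity $m\bigl(F_0(C)+f_2C(C-b)\bigr)$. It then certifies this quantity as $w_2M_0+P_0$, with $M_0\ge0$ and $P_0$ having positive coefficients. Your plan of signing $F_0(C)$ on its own discards the helpful $f_2$ term and still needs the width constraint.

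Two smaller points:
\begin{itemize}
\item $[t^2]G>0$ is not an obstacle. The signs $G(A)>0$, $G(B)<0$, $G(c)>0$ at $A<B<c$ already force a polynomial of degree at most two to be a genuine upward quadratic; this is how the paper obtains it.
\item $G(c)>0$ can be read off directly. Compatibility at $c$ gives $G_z(c)=(c-\rho)(c-a)(c-b)/(c-A)>0$, so $G(c)=mV(c)+mf_2G_z(c)>0$ without waiting for $F(C)>0$. The paper then gets $F(c)>0$ from the contact at $c$, the reverse of your order.
\end{itemize}
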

\begin{proof}
The endpoint widths, expressed in the six roots, are
\begin{equation}\label{eq:rootwidths}
w_1=A+B+C-a-b-c,\quad w_2=ab+ac+bc-AB-AC-BC,\quad w_3=ABC-abc.
\end{equation}
Their nonnegativity and~\eqref{eq:mrange} give $B<\rho<b$ after the positive denominators in~\eqref{eq:rho} are cleared. The exact sign certificates for this step, and for the last-root inequality used below, are specified in Appendix~\ref{app:local}. These inequalities concern the actual widths; the six roots are not treated as independently variable under their order alone.

At each contact node, both the relevant even endpoint and the relevant odd bound vanish. The values in~\eqref{eq:vA}--\eqref{eq:vBb} impose a vanishing derivative of $F_zE_i+tG_zH_j$ at $A,B,b$. For example, at $A$ this derivative equals
$A((A-\rho)E_+'(A)+G_z(A)U'(A))$, which is zero by~\eqref{eq:vA}. The analogous fourth condition at $c$ is the linear compatibility equation for the quadratic interpolant. Substitution of~\eqref{eq:Gz} reduces it to
\begin{equation}\label{eq:compatibility}
w_A(A-\rho)+w_B(B-\rho)+w_b(b-\rho)+w_c(c-\rho)=0.
\end{equation}
Thus the correction has all four double contacts. Substitution of $F_0$ gives the same contacts for the baseline pair $(mF_0,U+mV)$, proving~\eqref{eq:contactfactor} for their sum. Polynomial division is performed before evaluation, so $Q_{ij}$ is defined at $\xi_{ij}$ as well.

Equations~\eqref{eq:fzero} and~\eqref{eq:fleading} imply $F(0)>0$ and $[t^3]F<0$. The interpolation values and $B<\rho<b$ give
\begin{equation}\label{eq:nodesigns}
G(A)>0,\quad G(B)<0,\quad G(b)<0,\quad G(c)>0.
\end{equation}
For instance, $G_z(A)>0$ and $mV(A)>0$, whereas $G_z(B)<0$ and $U(B)<0$. Hence $G$ is a quadratic with positive leading coefficient and precisely the two roots in~\eqref{eq:Groots}.

The derivative contact equations give
$F(A)<0$, $F(B)<0$, $F(b)>0$, and $F(c)>0$.
Together with the endpoint signs of $F$, these give three distinct roots, one in $(0,A)$, one in $(B,b)$, and one after $c$. To locate the last root beyond $C$, the width inequality yields
\begin{equation}\label{eq:lastrootbound}
F_0(C)+f_2C(C-b)>0.
\end{equation}
An exact positive decomposition of its denominator-cleared left side is given by the last-root certificate in Appendix~\ref{app:local}. Since $\rho<b$,
$F(C)=m(F_0(C)+f_2C(C-\rho))>0$, proving $r_3>C$.
\end{proof}

\section{Nonnegativity of the four auxiliary polynomials}\label{sec:nonnegative}
The four double contacts simplify the frequency inequalities, but their existence alone does not imply nonnegativity. We first isolate the intervals on which the shorter algebraic certificates apply.

\begin{lemma}\label{lem:local}
For the polynomials in~\eqref{eq:contactfactor},
\begin{align}
Q_{00}(t)&>0 &&(a\le t\le A),\label{eq:local1}\\
Q_{i0}(t)&>0 &&(B\le t\le b,\ i=0,1),\label{eq:local2}\\
Q_{i1}(t)&>0 &&(A\le t\le c,\ i=0,1).\label{eq:local3}
\end{align}
\end{lemma}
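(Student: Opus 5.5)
Since $P_{ij}=(t-\xi_{ij})^2Q_{ij}$, positivity of $Q_{ij}$ on an interval not containing $\xi_{ij}$ is equivalent to positivity of $P_{ij}$ there. At $\xi_{ij}$ itself it is equivalent to positivity of $\tfrac12P_{ij}''(\xi_{ij})$. The plan is therefore to prove $P_{ij}>0$ on each stated interval, with $\xi_{ij}$ removed, and to check the second-derivative condition separately at the contact node whenever that node lies in the interval. The contact nodes lying in the intervals are $B,b$ for \eqref{eq:local2} with $i=1,0$, and $b,B$ for \eqref{eq:local3} with $i=0,1$. The node $\xi_{00}=c$ lies outside $[a,A]$.

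On each interval I would sign the two summands of $P_{ij}=FE_i+tGH_j$ using Proposition~\ref{prop:contacts} and Lemma~\ref{lem:geometry}.

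\emph{Case \eqref{eq:local1}, $[a,A]$.} Here $E_-\ge0$, and $U\ge0$ since $t\le A$. Also $G>0$ because $t<u$. So $tGU\ge0$, with equality only at $t=A$. The sign of $F$ changes at $r_1\in(0,A)$, so I would split at $r_1$. On $[a,r_1]$ both summands are nonnegative and not both zero. On $[r_1,A]$ we have $F\le0$ and must show that $tGU$ dominates $|F|E_-$. I would compare the two using the factorizations: $E_-$ vanishes at $a$, and $U$ vanishes at $A$.

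\emph{Case \eqref{eq:local2}, $[B,b]$ with $H_0=U<0$.} Here $G<0$ on $[B,b]$, because $u<B$ and $b<v$. Hence $tGU>0$ on the whole interval. $E_-$ is nonnegative on $[b,c]$, so on $[B,b]$ we have $E_-\le0$; meanwhile $E_+\ge0$ on $[B,b]$ since it lies in $[B,C]$. $F$ changes sign at $r_2$. So $FE_i$ has the favourable sign on one side of $r_2$ and the unfavourable sign on the other. On the unfavourable side I would reduce to a ratio bound of the form $|FE_i|\le tG U$.

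\emph{Case \eqref{eq:local3}, $[A,c]$ with $H_1=mV$.} The factor $V$ changes sign at $B$ and $b$, and $G$ changes sign at $u$ and $v$. The product $GV$ is therefore positive except on the short intervals $[u,B]$ and $[b,v]$. I would subdivide $[A,c]$ at the points $u,B,r_2,b,v$ and handle each piece as above.

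The main obstacle is the unfavourable pieces, where the two terms have opposite signs and a genuine quantitative inequality is needed. My first attempt would use the concavity structure announced in the introduction. Write $P_{ij}/m$ as a function of the remaining ratio parameter $m\in(1,M]$, using $F=mF_0+mf_2F_z$ and $G=U+mV+mf_2G_z$; the $m$-dependence enters through $w_B$, $w_b$, $\rho$ and the interpolation values. I would try to show that, for fixed $t$, the relevant quantity is concave in $m$, so that it suffices to verify the inequality at $m=1$ and $m=M$.

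At these endpoint values, clear the positive denominators and substitute shifted variables adapted to the ordering \eqref{eq:order}, for instance $A=a+x_1$, $B=A+x_2$, and so on, together with $t$ parametrized within the subinterval. I would then seek an explicit decomposition with nonnegative coefficients, using the width inequalities \eqref{eq:rootwidths} as extra nonnegative generators. This is a finite algebraic certificate of the kind the appendix supplies.

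The second-derivative checks at $B$ and $b$ reduce to the same kind of certificate, at a single point.
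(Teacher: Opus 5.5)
\textbf{There is a genuine gap.} Your proposal does not yet prove any of the three inequalities. Each band contains a piece where $FE_i$ and $tGH_j$ have opposite signs. On those pieces you only propose to look for a certificate, after reducing to $m\in\{1,M\}$ by a concavity in $m$ that you do not justify.

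That reduction is not free. By \eqref{eq:split0}, at $t=A$ the $m^2$-coefficient of the cleared quotient $D_1D_2W(m)Q_{00}(A)$ is $W_1D_1D_2(b-A)^2R_-(A)+AT\,f^2(e+f)^2d(d+e+f+g)$. Here $W_1>0$ is the coefficient of $m$ in $W$, $R_-(A)=(A-a)A(C-B)/D_2>0$, and the second term is \eqref{eq:concavehigh} read at $v=A-b$. So this quantity is strictly convex in $m$ at $t=A$. This is why the paper certifies $[a,A]$ by a joint Bernstein expansion of degree $(4,2)$ in $(\theta,\lambda)$, whose middle $\lambda$-coefficients absorb a nonconcave ratio dependence.

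Your subdivision at the implicit roots $r_1,r_2,u,v$ also cannot be turned into polynomial certificates, since those endpoints are not explicit. In addition, the sign bookkeeping has errors:
\begin{itemize}
\item $E_-<0$ on $(a,b)$, so on $[a,A]$ the unfavourable piece is $[a,r_1]$, where $F\ge0$, not $[r_1,A]$.
\item The contact nodes of $P_{00}$ and $P_{10}$ are $c$ and $A$, so no node lies in $[B,b]$.
\end{itemize}

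\textbf{Missing structural devices.} You are also missing the two devices that make the paper's argument short.

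For \eqref{eq:local2}, no new inequality is needed. On $[B,b]$ one has $G<0$ and $mV-U>0$, hence $P_{i0}=P_{i1}-tG(mV-U)>P_{i1}\ge0$. So this band follows from \eqref{eq:local3}, and your proposed bound $|FE_i|\le tGU$ is never required.

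For \eqref{eq:local3}, the paper splits the quotient, not the interval. The baseline quotients in \eqref{eq:baseline2} are $m\bigl((t-c)^2R_-+mt(t-B)^2\bigr)$ and $m\bigl((t-A)^2R_++mt(t-b)^2\bigr)$. These are nonnegative on $[A,c]$ because $R_->0$ for $t\ge A$ and $R_+>0$ everywhere; the width inequalities enter here. The correction quotient is $t$ times a quadratic, proved positive on $[A,c]$ by coefficient identities in $\kappa,\sigma\ge0$. Both summands then have a common sign, so no frequency splitting is needed.

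Only on $[a,A]$, where the summands genuinely conflict, is the full quotient certified at once.
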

\begin{proof}
Introduce the residual quadratics
\begin{align}
R_-(t)&=\frac{a(t-A)(t-B)}{D_1}
+\frac{(t-a)(Ct-AB)}{D_2},\label{eq:Rminus}\\
R_+(t)&=\frac{C(t-b)(t-c)}{D_2}
+\frac{(t-C)(at-bc)}{D_1}.\label{eq:Rplus}
\end{align}
For the baseline pair, the four contact quotients are
\begin{align}
Q^0_{00}&=t(t-A)^2+m(t-b)^2R_-,
&Q^0_{10}&=t(t-c)^2+m(t-B)^2R_+,\label{eq:baseline1}\\
Q^0_{01}/m&=(t-c)^2R_-+mt(t-B)^2,
&Q^0_{11}/m&=(t-A)^2R_++mt(t-b)^2.\label{eq:baseline2}
\end{align}
The residuals satisfy $R_-(t)>0$ for $t\ge A$ and $R_+(t)>0$ for all real $t$. For $t\ge B$, the first assertion also follows immediately from~\eqref{eq:Rminus}; the interval $[A,B]$ and the second assertion use the width inequalities.

For the upper odd bound, the correction quotient has the form $t$ times a quadratic, which is strictly positive on $[A,c]$. Its exact coefficient decompositions use the two nonnegative width expressions~\eqref{eq:slacks} below; Appendix~\ref{app:local} specifies the identities and their domains. Adding it to the nonnegative baseline in~\eqref{eq:baseline2} proves~\eqref{eq:local3}.

On $[B,b]$, the root locations~\eqref{eq:Groots} give $G<0$, while $mV-U>0$. Hence
\[
P_{i0}=P_{i1}-tG(mV-U)>P_{i1}\ge0.
\]
Neither lower-odd contact node $A,c$ lies in $[B,b]$, so division by the positive contact square proves~\eqref{eq:local2}. This part of the argument needs no separate lower-correction sign assertion.

On $[a,A]$, the baseline and correction are certified together. In the coordinate $t=a+\theta(A-a)$, $0\le\theta\le1$, the denominator-cleared $Q_{00}$ has a strictly positive Bernstein decomposition in the frequency and ratio coordinates. The 15 entries have positive expressions in the root gaps. This gives~\eqref{eq:local1}. Keeping the full quotient here is necessary: the sign of a summand need not agree with the sign of the sum.
\end{proof}

The remaining two frequency intervals require the larger coefficient certificates. Their source can be written compactly. Define
\begin{align}
W(m)&=(C-B)(b-A)(c-A)^2+(b-a)(c-A)^2(c-B)\notag\\
&\quad+m\bigl((C-A)(b-B)^2(c-B)+(c-a)(b-A)(b-B)^2\bigr),\label{eq:W}\\
T&=(a(A+B)+bc)D_2+(AB+C(b+c))D_1=D_1D_2f_2.\label{eq:T}
\end{align}
Both $W(m)$ and $T$ are positive for the parameters under consideration. Let $Z_i(m,t)$ be the polynomial defined by
\begin{equation}\label{eq:Z}
 mW(m)\bigl((t-\rho)E_i(t)+G_z(t)U(t)\bigr)
 =(t-\xi_{i0})^2 Z_i(m,t).
\end{equation}
In this identity, $\rho$ and $G_z$ use the same ratio $m$. Clearing their denominators makes $Z_i$ polynomial in $m$ and $t$, of degree at most two in each variable. An explicit division-free specification is given in Appendix~\ref{app:source}.

Put
\begin{equation}\label{eq:cores}
\mathcal C_0(m,t)=D_1D_2W(m)(t-A)^2+TZ_0(m,t),\qquad
\mathcal C_1(m,t)=D_1D_2W(m)(t-c)^2+TZ_1(m,t).
\end{equation}
The identity behind the reduction is
\begin{align}
D_1D_2W(m)Q_{00}(t)&=mW(m)(t-b)^2D_1D_2R_-(t)+t\mathcal C_0(m,t),\label{eq:split0}\\
D_1D_2W(m)Q_{10}(t)&=mW(m)(t-B)^2D_1D_2R_+(t)+t\mathcal C_1(m,t).\label{eq:split1}
\end{align}
These are polynomial identities, including at $t=c$ and $t=A$.

\begin{proposition}\label{prop:core}
Under~\eqref{eq:order}, the nonnegative widths~\eqref{eq:rootwidths}, and $1\le m\le M$, the two polynomials in~\eqref{eq:cores} satisfy
\begin{equation}\label{eq:coresigns}
\mathcal C_0(m,t)\ge0\quad(t\ge b),\qquad
\mathcal C_1(m,t)\ge0\quad(0\le t\le B).
\end{equation}
Consequently, $Q_{00}\ge0$ on $[b,\infty)$ and $Q_{10}\ge0$ on $[0,B]$.
\end{proposition}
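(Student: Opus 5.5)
The consequence part is immediate once the core signs are known, so I would dispose of it first. By \eqref{eq:split0}, $D_1D_2W(m)Q_{00}(t)$ is the sum of $mW(m)(t-b)^2D_1D_2R_-(t)$ and $t\,\mathcal C_0(m,t)$. For $t\ge b\ge A$, Lemma~\ref{lem:local}'s residual fact $R_-(t)>0$ for $t\ge A$, together with $W(m)>0$, $m>1$ and $D_1,D_2>0$, makes the first summand nonnegative. The second summand is nonnegative by \eqref{eq:coresigns}. Dividing by $D_1D_2W(m)>0$ gives $Q_{00}\ge0$ on $[b,\infty)$. The same reasoning with \eqref{eq:split1} and $R_+>0$ on all of $\R$ gives $Q_{10}\ge0$ on $[0,B]$. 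The actual $m=k/h$ lies in $[1,M]$ by \eqref{eq:mrange}, so the parameter hypothesis of the proposition covers it. The whole weight therefore lies on the two core inequalities.

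For $\mathcal C_0$ on $t\ge b$, I would first make everything division-free, using the specification of $Z_0$ from Appendix~\ref{app:source}. Then $\mathcal C_0$ is a polynomial in $t$ and $m$, of degree at most two in each, whose coefficients are polynomials in the six roots. Next I would change to coordinates in which every constraint becomes a positivity statement:
\begin{itemize}[nosep]
\item root gaps $g_1=a$, $g_2=A-a$, $g_3=B-A$, $g_4=b-B$, $g_5=c-b$, $g_6=C-c$, all positive by \eqref{eq:order};
\item the frequency offset $t=b+\tau$ with $\tau\ge0$;
\item the ratio $m=1+\mu$ with $0\le\mu\le M-1$.
\end{itemize}
Since $\mathcal C_0$ has degree at most two in $m$, I would write it in the quadratic Bernstein basis on $[1,M]$. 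Clearing the factor $(b-B)^2$ introduced by $M$ keeps the coefficients polynomial.

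It then suffices to show that each Bernstein coefficient, viewed as a polynomial in $\tau$, has coefficients that are nonnegative combinations of monomials in $g_1,\dots,g_6$ and the widths $w_1,w_2,w_3$, with $w_3=ABC-abc$ written out in the gaps. This is the kind of finite algebraic certificate the introduction promises. Concavity in $m$, also mentioned there, offers a cheaper route. If the $m^2$-coefficient of $\mathcal C_0$ is nonpositive, it suffices to check the two endpoint values $m=1$ and $m=M$, each a univariate-in-$\tau$ positivity problem. For $\mathcal C_1$ on $[0,B]$ I would do the same with the substitution $t=B\theta/(1+\theta)$, or a Bernstein expansion in $t/B\in[0,1]$, rather than a half-line shift.

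The main obstacle is that the root gaps alone do not force the signs. The widths $w_1,w_2$ are not implied by the ordering \eqref{eq:order} (the proof of Proposition~\ref{prop:contacts} says as much), so some coefficients will only be nonnegative modulo the width relations. I would try first to write each troublesome coefficient as $\sigma_0+\sigma_1w_1+\sigma_2w_2+\sigma_3w_3$, with $\sigma_i$ polynomials in the gaps having nonnegative coefficients. The multipliers $\sigma_i$ would be found by linear programming over monomials of bounded degree and then checked exactly.

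If a coefficient resists this, the fallback is to split the $m$-interval at an intermediate value, or to subdivide $\tau$. I would also exploit the term $TZ_i$: the baseline term $D_1D_2W(m)(t-A)^2$ is manifestly nonnegative and should dominate wherever $Z_i$ is negative. An inequality of the form $T|Z_i^-|\le D_1D_2W(m)(t-A)^2$ is then what the certificate must express.
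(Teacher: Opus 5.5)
Your derivation of the ``consequently'' clause is correct and matches the paper: \eqref{eq:split0}--\eqref{eq:split1}, the residual signs, $W(m)>0$ and $t\ge0$. You also anticipate the paper's architecture for the cores: concavity in $m$, checks at $m=1$ and $m=M$, the shift $t=b+v$, and the substitution $t=Bv/(1+v)$. But the inequalities \eqref{eq:coresigns} are the entire content of the proposition, and for them you describe a search rather than an argument.

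The first missing step is concavity. You only assume it (``if the $m^2$-coefficient is nonpositive''), and it is not automatic. The paper computes the $m^2$-coefficients of $Z_0(m,b+v)$ and $Z_1(m,B+\tau)$ explicitly in \eqref{eq:concavehigh}--\eqref{eq:concavelow}. Their signs need $q\le0$, which comes from the width slack $\sigma\ge0$ only through the identity \eqref{eq:slopebudget}. Your Bernstein-in-$m$ alternative does not avoid this, because without concavity the middle Bernstein coefficient of a nonnegative quadratic can be negative.

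The second, more serious gap is that you give no mechanism by which the width constraints can be used at $m=M$. Your ansatz $\sigma_0+\sigma_1w_1+\sigma_2w_2+\sigma_3w_3$ is linear in the widths, with multipliers coefficientwise nonnegative in the six gaps. The core, however, has degree five in the outer roots $(a,C)$. You give no reason such multipliers exist, and ``find them by LP, otherwise subdivide'' is a research plan, not a proof.

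The idea that makes the paper's endpoint tractable is the exact relation $Lw_3+Hw_1+\Delta w_2=NJ$, whose coefficients depend only on the inner roots. With it, $a$ and $C$ are the linear forms $\alpha,\gamma$ of \eqref{eq:coorda}--\eqref{eq:coordC} in $x=w_3/(NJ)$, $y=w_1/(NJ)$, $z=w_2/(NJ)$. Homogenizing $2\mathcal C_i$ to degree five in $(S,\alpha,\gamma)$ then turns it into a quintic form in the nonnegative variables $x,y,z$. The constrained outer data $a,d,r$ disappear, and what remains is unconstrained positivity of 21 coefficients in $A,e,f,g$ and the frequency.

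That certificate uses all degree-five products $w_3^jw_1^kw_2^l$, not multiples of single widths. Even after this reduction, some factors are not coefficientwise nonnegative in $A$ and must be kept whole (Appendix~\ref{app:certificate}). So there is little reason to expect your coarser ansatz in the original gaps to close. As written, your proposal proves the corollary but not \eqref{eq:coresigns}.
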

\begin{proof}
We give the ratio reduction explicitly and then describe the finite endpoint certificates. Set
\begin{equation}\label{eq:gaps}
d=A-a,\quad e=B-A,\quad f=b-B,\quad g=c-b,\quad r=C-c.
\end{equation}
The width inequalities imply
\begin{equation}\label{eq:slacks}
\kappa=d+r-f\ge0,\qquad
\sigma=f(f+g)-d(e+2f+g)-(2d+e)\kappa\ge0.
\end{equation}
In fact, $\kappa=w_1$ and $\sigma=w_2+2a w_1$. In particular, $d\le f$. Let
\begin{equation}\label{eq:slope}
q=dg+dr-ef+er-f^2-fg.
\end{equation}
The exact identity
\begin{equation}\label{eq:slopebudget}
(2d+e)q+(d+e)\sigma
=-d\bigl(2(d+e)^2+f(e+f)+g(f-d)\bigr)
\end{equation}
shows that $q\le0$. The coefficients of $m^2$ in $Z_0(m,b+v)$ and $Z_1(m,B+\tau)$ are, respectively,
\begin{align}
f^2v\bigl(qv-(e+f)(-q+d(d+e+f+g))\bigr)&\le0 &&(v\ge0),\label{eq:concavehigh}\\
f^2\tau\bigl(q\tau+(f+g)(f^2+ef+(f-d+r)(g+r))\bigr)&\le0 &&(\tau\le0).\label{eq:concavelow}
\end{align}
Since $W$ is affine in $m$ and $T>0$, both cores are concave in $m$ on their required frequency intervals.

At $m=1$, the nonnegative decompositions specified in Appendix~\ref{app:local} give both core inequalities. At $m=M$, use the following coordinates derived from the three actual widths. With $A>0$ and $e,f,g>0$, define
\begin{align}
L&=e+2f+g,\qquad J=f(f+g),\qquad N=(e+f)(e+f+g),\label{eq:coords1}\\
\Delta&=bc-AB=AL+N,\label{eq:coordDelta}\\
H&=bc(A+B)-(b+c)AB=A^2L+(2A+e)N.\label{eq:coords2}
\end{align}
Put $x=w_3/(NJ)$, $y=w_1/(NJ)$, and $z=w_2/(NJ)$. These are nonnegative, and direct expansion gives
\begin{align}
S:=Lx+Hy+\Delta z&=1,\label{eq:coordS}\\
\alpha:=(AL-J)x+AB\Delta y+ABLz&=a,\label{eq:coorda}\\
\gamma:=((c+f)L-J)x+bc\Delta y+bcLz&=C.\label{eq:coordC}
\end{align}
Thus the outer roots are recovered exactly from the actual widths.

Homogenize $2\mathcal C_i$ to total degree five in $(S,\alpha,\gamma)$, and substitute the three linear forms~\eqref{eq:coordS}--\eqref{eq:coordC}. Denote the resulting homogeneous polynomial in $x,y,z$ by $\mathcal H_i(m,t;x,y,z)$. Its full expansion is
\begin{equation}\label{eq:21}
\mathcal H_i(m,t;x,y,z)
=\sum_{j+k+l=5} c^{(i)}_{jkl}(m,t)x^jy^kz^l.
\end{equation}
There are exactly $\binom72=21$ indices in this sum. The finite coefficient certificates in Appendix~\ref{app:certificate} prove
\begin{equation}\label{eq:42}
c^{(0)}_{jkl}(M,t)\ge0\quad(t\ge b),\qquad
c^{(1)}_{jkl}(M,t)\ge0\quad(0\le t\le B)
\end{equation}
for all 21 indices. Hence both $\mathcal H_i$ are nonnegative at the actual coordinates. Equations~\eqref{eq:coordS}--\eqref{eq:coordC} identify these values with twice the original cores, proving the endpoint inequalities at $M$. No positivity assertion at an arbitrarily chosen root configuration is used in this identification.

Finally, let $m=(1-\theta)+\theta M$, $0\le\theta\le1$. If $a_2(t)\le0$ is the coefficient of $m^2$ in either core, then
\begin{equation}\label{eq:concavity}
\mathcal C_i(m,t)=(1-\theta)\mathcal C_i(1,t)
+\theta\mathcal C_i(M,t)-a_2(t)\theta(1-\theta)(M-1)^2\ge0.
\end{equation}
The last assertion follows from~\eqref{eq:split0}--\eqref{eq:split1}, the positive clearing factor, and the residual signs on the two stated intervals.
\end{proof}

\begin{proposition}\label{prop:global}
For the fixed pair $(F,G)$ in~\eqref{eq:FG}, all four polynomials $P_{ij}$ are nonnegative on $[0,\infty)$.
\end{proposition}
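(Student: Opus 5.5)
The plan is to assemble the local results of Lemma~\ref{lem:local} and Proposition~\ref{prop:core} into a covering of $[0,\infty)$ for each of the four pairs $(i,j)$. Since $P_{ij}=(t-\xi_{ij})^2Q_{ij}$ by Proposition~\ref{prop:contacts}, it suffices to show $Q_{ij}\ge0$ on each subinterval of a partition of $[0,\infty)$. On sub-intervals not reached by the quotient results, I would instead argue directly with the signs of $F,G,E_i,H_j$. The basic tool is the sign pattern recorded in Proposition~\ref{prop:contacts}:
\begin{itemize}[label={}]
\item $F>0$ on $[0,r_1)$ with $r_1<A$;
\item $F<0$ on $(r_1,r_2)\supset[A,B]$;
\item $F>0$ on $(r_2,r_3)\supset[b,C]$;
\item $G>0$ outside $[u,v]$ and $G<0$ on $(u,v)\supset[B,b]$.
\end{itemize}
These are combined with the known signs of $E_\pm$ from~\eqref{eq:roots} and of $U,V$ from~\eqref{eq:UV}.

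\emph{Step 1 (the pairs with the upper odd bound).} For $j=1$, Lemma~\ref{lem:local} already gives $Q_{i1}>0$ on $[A,c]$, so only $[0,A]$ and $[c,\infty)$ remain.
\begin{itemize}[label={}]
\item On $[0,a]$ (resp.\ $[0,A]$ for $i=1$) both $E_i\ge0$ and $mV>0$. Here I split at $r_1$. For $t\le r_1$, both $F\ge0$ and $G>0$, so $P_{i1}\ge0$ termwise.
\item For the remaining piece I use comparison with the lower bound. Since $mV\ge U$ off $[B,b]$ when $G>0$, I get $P_{i1}\ge P_{i0}$ wherever $G\ge0$ and $mV-U\ge0$. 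So nonnegativity there transfers from $P_{i0}$, which is treated in Step 2. Here $mV-U=(m-1)t^2+\dots$, and its sign on $[0,A]$ must be checked via $m>1$ and the root order.
\item On $[c,\infty)$ the same comparison holds: $G>0$ there because $v<c$, and $mV-U$ must be checked positive there. Then Step 2 reduces to $P_{i0}$.
\end{itemize}

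\emph{Step 2 (the pairs with the lower odd bound).}
\begin{itemize}[label={}]
\item For $P_{00}$: Proposition~\ref{prop:core} gives $[b,\infty)$, Lemma~\ref{lem:local} gives $[a,A]$ and $[B,b]$. The gaps are $[0,a]$ and $[A,B]$.
\item On $[0,a]$: $E_-\ge0$, $U>0$, $G>0$ (since $u>A$), and $F\ge0$ up to $r_1$. A sign argument therefore works where $F\ge 0$, and the subinterval $[r_1,a]$ (if nonempty) is the delicate part.
\item On $[A,B]$: $E_-\le0$, $F<0$, so $FE_-\ge0$. Also $U\le0$, and $G\ge0$ on $[A,u]$ but $G\le0$ on $[u,B]$. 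So the sign argument gives $[u,B]$ directly. On $[A,u]$ I would again compare with $P_{01}\ge0$, using $mV-U$ and the sign of $G$.
\item For $P_{10}$: Proposition~\ref{prop:core} covers $[0,B]$ and Lemma~\ref{lem:local} covers $[B,b]$. On $[b,C]$, $F>0$, $E_+\ge0$, $U\le0$ until $c$ and $G\le0$ until $v$, so a sign split at $v$ and $c$ handles most of it.
\item On $[C,\infty)$: $E_+\le0$, $F\le0$ beyond $r_3>C$, and $tGU\ge0$. The piece $[C,r_3]$ needs $tGU$ to dominate.
\end{itemize}

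\emph{Main obstacle.} The leftover windows — $[r_1,a]$ for $P_{00}$, $[C,r_3]$ for $P_{10}$, and the comparison steps needing $mV-U\ge0$ — are where pure sign bookkeeping fails. I would first check whether the covering is arranged so that these windows are empty. This is plausible given the root constraints $r_1<A$ and $r_3>C$ and the placement of the contact nodes. Otherwise, I would handle each window by the convex-combination comparison $P_{i0}\le P_{i1}$ (or the reverse), using the corner structure of Lemma~\ref{lem:rectangle}. If that also fails, I would fall back on a Bernstein certificate on the window, as was done for $[a,A]$ in Lemma~\ref{lem:local}.
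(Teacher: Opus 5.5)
Your handling of the upper-odd pairs $P_{i1}$ matches the paper's argument. Where $G\ge0$ you use $P_{i1}-P_{i0}=tG(mV-U)\ge0$, and where $G<0$ you are in $(u,v)\subset(A,c)$, which \eqref{eq:local3} covers. The sign of $mV-U$ needs no separate check: \eqref{eq:bounds} gives $hU<O_-\le O_+<kV=hmV$ with $h>0$, so $mV-U>0$ on all of $[0,\infty)$. However, this step only reduces the problem to $P_{i0}\ge0$, and that is where your proposal has a genuine gap.

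\textbf{The open windows.} For the lower-odd pairs you leave open $[0,a]\cap\{F<0\}$ and $[A,u]$ for $P_{00}$, and $[v,c]$ and $[C,r_3]$ for $P_{10}$. Your sign split does not settle $[v,c]$, since there $G\ge0$ but $U\le0$. Hoping these windows are empty does not work. The intervals $[A,u]$, $[v,c]$ and $[C,r_3]$ have positive length by \eqref{eq:Groots} and \eqref{eq:Froots}, and $r_1<A$ says nothing about $a$.

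\textbf{Why the fallback fails.} On every one of these windows $G\ge0$. So $P_{i1}-P_{i0}=tG(mV-U)\ge0$ bounds $P_{i0}$ from \emph{above}, not below.

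\textbf{The missing idea.} You need the comparison in the even index, $P_{10}-P_{00}=F(E_+-E_-)$. Since $E_+-E_-\ge0$ by \eqref{eq:widthpoly}, its sign is the sign of $F$. Combine this with the cross-coverage in Proposition~\ref{prop:core}:
\begin{itemize}
\item Wherever $F<0$ on $[0,B]$ (all of $[A,B]$, and any part of $[0,a]$ past $r_1$), you get $P_{00}\ge P_{10}\ge0$, because $P_{10}\ge0$ on $[0,B]$.
\item On $[b,r_3]$, where $F\ge0$, you get $P_{10}\ge P_{00}\ge0$, because $P_{00}\ge0$ on $[b,\infty)$.
\item Beyond $r_3$, the signs $F\le0$, $E_+\le0$, $G>0$, $U>0$ give $P_{10}\ge0$ termwise.
\end{itemize}
There is no circularity. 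Each comparison invokes the other pair only on the range where Proposition~\ref{prop:core} proves it directly. With this comparison, the covering closes with no Bernstein certificate or further case analysis.
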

\begin{proof}
The endpoint differences are
\begin{equation}\label{eq:diffs}
P_{10}-P_{00}=F(E_+-E_-),\qquad
P_{i1}-P_{i0}=tG(mV-U).
\end{equation}
Both $E_+-E_-$ and $mV-U$ are nonnegative on $[0,\infty)$, and the latter is strictly positive.

For $P_{00}$, Proposition~\ref{prop:core} covers $t\ge b$, while Lemma~\ref{lem:local} covers $[a,A]$ and $[B,b]$. On $[A,B]$, $F<0$, so $P_{00}\ge P_{10}\ge0$ by~\eqref{eq:diffs} and the low-frequency inequality of Proposition~\ref{prop:core}. On $[0,a]$, if $F\ge0$, both terms in $P_{00}=FE_-+tGU$ are nonnegative. If $F<0$, the same comparison with $P_{10}$ applies. This proves $P_{00}\ge0$ on the entire half-axis.

For $P_{10}$, Proposition~\ref{prop:core} covers $[0,B]$ and Lemma~\ref{lem:local} covers $[B,b]$. On $[b,r_3]$, $F\ge0$, so $P_{10}\ge P_{00}$. For $t\ge r_3>C$, $F\le0$, $E_+\le0$, $G>0$, and $U>0$, giving $P_{10}\ge0$ directly.

If $G(t)\ge0$, the second relation in~\eqref{eq:diffs} gives $P_{i1}\ge P_{i0}$. If $G(t)<0$,~\eqref{eq:Groots} places $t$ inside $(A,c)$, where~\eqref{eq:local3} applies. Thus both upper-odd inequalities hold as well.
\end{proof}

\section{Strict positivity and proof of the main theorem}\label{sec:strict}
\begin{proposition}\label{prop:strict}
For the numerator~\eqref{eq:candidate}, every $p\in\K$ and every $t\ge0$ satisfy
\begin{equation}\label{eq:strictpair}
hF(t)E_p(t)+tG(t)O_p(t)>0.
\end{equation}
Moreover, $\deg\beta=6$ and the leading coefficient of $\beta$ is positive.
\end{proposition}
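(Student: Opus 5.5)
The proof will reduce the strict inequality~\eqref{eq:strictpair} to the nonnegativity of the four auxiliary polynomials in Proposition~\ref{prop:global}. The strict odd bounds of Lemma~\ref{lem:bounds} supply strictness away from the roots of $G$, and the root locations of Proposition~\ref{prop:contacts} handle those roots directly. Since $h>0$, dividing~\eqref{eq:strictpair} by $h$ gives the equivalent claim
\[
F(t)E_p(t)+tG(t)\,\frac{O_p(t)}{h}>0 .
\]
By Lemma~\ref{lem:bounds}, and because $k=mh$, we have $U(t)<O_p(t)/h<mV(t)$ for every $t\ge0$. Also $E_-(t)\le E_p(t)\le E_+(t)$.

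First I treat a fixed $t>0$ with $G(t)\ne0$. Consider $\Phi(E,H)=F(t)E+tG(t)H$ on the rectangle $[E_-(t),E_+(t)]\times[U(t),mV(t)]$. Its four corner values are exactly $P_{ij}(t)$, which are $\ge0$ by Proposition~\ref{prop:global}. Since $\Phi$ is affine in each variable separately, $\Phi(E_p(t),H)\ge0$ at both endpoints $H=U(t)$ and $H=mV(t)$. As a function of $H$, $\Phi$ is affine with nonzero slope $tG(t)$. An affine function with nonzero slope that is nonnegative at both ends of an interval is strictly positive at every interior point. Because $O_p(t)/h$ lies strictly inside $(U(t),mV(t))$, this gives the strict inequality. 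At $t=0$ the pairing reduces to $hF(0)E_p(0)=hF(0)p_6$. Here $F(0)>0$ by Proposition~\ref{prop:contacts}, and $p_6>0$ because $p$ is Hurwitz stable.

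The remaining case is $t\in\{u,v\}$, the two roots of $G$. There the pairing equals $hF(t)E_p(t)$, so I need $F(t)E_p(t)>0$ strictly; this is the step that requires genuinely new sign information.

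For $u\in(A,B)$: by~\eqref{eq:Froots} we have $(A,B)\subset(r_1,r_2)$, and since $[t^3]F<0$ with $F(0)>0$, $F<0$ on $(r_1,r_2)$. Also $E_p(u)\le E_+(u)=-(u-A)(u-B)(u-C)<0$. The product $F(u)E_p(u)$ is therefore positive.

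For $v\in(b,c)$: we have $(b,c)\subset(r_2,r_3)$, where $F>0$. Also $E_p(v)\ge E_-(v)=-(v-a)(v-b)(v-c)>0$ by~\eqref{eq:order}. Again the product is positive.

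Finally, for the degree: $hF(-s^2)$ has $s^6$-coefficient $-h[t^3]F>0$, while $sG(-s^2)$ has degree at most five. Hence $\deg\beta=6$ with positive leading coefficient, and dividing by it gives a monic numerator.
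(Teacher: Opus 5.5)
Your proposal is correct and follows essentially the same route as the paper: nonnegativity at the corners from Proposition~\ref{prop:global}, combined with the strict odd bounds of Lemma~\ref{lem:bounds}, away from the zeros of $G$; the sign of $FE_p$ from~\eqref{eq:order} and~\eqref{eq:Froots} at $u,v$; $hF(0)p_6>0$ at $t=0$; and the leading coefficient $-h[t^3]F>0$. The only difference is cosmetic: you merge the paper's separate cases $G(t)>0$ (comparison with $U$) and $G(t)<0$ (comparison with $mV$) into one observation about an affine function of the odd variable with nonzero slope $tG(t)$.
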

\begin{proof}
First suppose $t>0$ and $G(t)>0$. By~\eqref{eq:bounds},
\[
hF(t)E_p(t)+tG(t)O_p(t)
>h\bigl(F(t)E_p(t)+tG(t)U(t)\bigr)\ge0.
\]
The last inequality follows from Proposition~\ref{prop:global} and interpolation between $E_-$ and $E_+$. If $G(t)<0$, use $O_p(t)<hmV(t)$ instead; multiplication by $G(t)$ reverses the inequality and the upper-odd bound again gives a strictly positive pairing.

The remaining positive frequencies are the two zeros $u,v$ of $G$. On $(A,B)$, both even endpoints are negative and~\eqref{eq:Froots} gives $F<0$. On $(b,c)$, both even endpoints and $F$ are positive. Consequently, $F(u)E_p(u)>0$ and $F(v)E_p(v)>0$ for every $p\in\K$. This proves strictness precisely at the points where the odd-part comparison contributes no strict inequality. At $t=0$, the pairing is $hF(0)p_6>0$.

The correction $F_z$ has degree two, so
\begin{equation}\label{eq:degree}
[s^6]\beta=-h[t^3]F
=hm\left(\frac a{D_1}+\frac C{D_2}\right)>0.
\end{equation}
The odd part has degree at most five. Thus the actual degree is six.
\end{proof}

\begin{proof}[Proof of Theorem~\ref{thm:main}]
If the even component is fixed, apply Lemma~\ref{lem:fixed}. Otherwise, Lemmas~\ref{lem:geometry} and~\ref{lem:bounds} supply the parameters for~\eqref{eq:candidate}. Propositions~\ref{prop:contacts}, \ref{prop:core}, and~\ref{prop:global} prove the required nonnegative auxiliary inequalities, and Proposition~\ref{prop:strict} proves strict positivity for every member of $\K$ at every $t\ge0$. Formula~\eqref{eq:pairing}, with the even numerator part $hF$, gives~\eqref{eq:main} for every real $\omega$. Dividing by the positive leading coefficient in~\eqref{eq:degree} makes the common numerator monic without changing any sign.
\end{proof}

The strict inequality in the theorem is stronger than the intermediate conditions $P_{ij}\ge0$. The auxiliary polynomials intentionally have double contacts. Their zeros do not obstruct synthesis because the original odd components lie strictly between the quadratic bounds, and the separate even-part argument handles the zeros of $G$.

\section{An interval example}\label{sec:example}
Consider the full six-parameter family
\begin{equation}\label{eq:example}
p(s)=(s+1)^6+\sum_{j=0}^{5}\delta_j s^j,\qquad |\delta_j|\le\frac1{100}.
\end{equation}
This example permits an independent exact check of robust stability and common-numerator synthesis. For every real $\omega$ and $0\le j\le5$,
$|\omega|^j\le(1+\omega^2)^3$. Hence
\begin{equation}\label{eq:examplebound}
\left|\frac{\sum_{j=0}^{5}\delta_j(i\omega)^j}{(1+i\omega)^6}\right|
\le\frac6{100}<1.
\end{equation}
The homotopy obtained by multiplying all $\delta_j$ by $\lambda\in[0,1]$ has no imaginary-axis zero. Its leading coefficient is fixed and its coefficients are bounded, so its roots cannot cross the imaginary axis or escape to infinity. Since $(s+1)^6$ is Hurwitz stable, every member of~\eqref{eq:example} is Hurwitz stable.

Theorem~\ref{thm:main} therefore applies to the entire coefficient box. Here a simpler common numerator is also available, namely $\beta(s)=(s+1)^6$. Writing the quotient in~\eqref{eq:examplebound} as $z$, one obtains
\begin{equation}\label{eq:exampleSPR}
\Ree\frac{\beta(i\omega)}{p(i\omega)}
=\frac{1+\Ree z}{|1+z|^2}
\ge\frac{1-0.06}{(1+0.06)^2}>0.
\end{equation}
This bound holds for all coefficients and all real frequencies; it is not a frequency-grid test. The example illustrates the full-box quantifiers. The general construction is needed when no such uniform perturbation estimate supplies a numerator.

\section{Formal verification and conclusion}\label{sec:formal}
The accompanying Lean~4 development proves the original interval-family statement, including independent closed coefficient intervals, robust Hurwitz stability, one fixed real numerator of actual degree six, and strict positivity at every real frequency. The public theorem is
\texttt{SPR.N6.Direct.robustSPR} in \path{Lean/Main.lean}. It is obtained from the completed coefficient-certificate theorem, rather than from an assumed certificate-existence premise. The development uses Lean~4.30.0 and mathlib~v4.30.0. Its audited theorem dependencies contain only \texttt{propext}, \texttt{Classical.choice}, and \texttt{Quot.sound}.

The finite calculations in Proposition~\ref{prop:core} are part of the proof. The identities are stored as static exact expressions and checked in Lean, together with their sign implications and their relation to the original numerator. Numerical searches and external symbolic calculations used during exploration are not hypotheses or trusted proof steps. Appendix~\ref{app:certificate} describes the complete coefficient index set and the verification scheme, and the manuscript directory contains a source index for the corresponding modules.

The result establishes equal-degree robust SPR synthesis for sextic interval families. Its construction separates three issues: root geometry specifies the numerator, polynomial certificates give nonnegative auxiliary inequalities, and strict odd-part separation supplies the final strictness. The degree-six structure is essential to the displayed cubic and quadratic formulas. No conclusion for arbitrary order is asserted here.

\section*{Acknowledgements}
GPT-6 Astra was used for calculations, proof development, and editorial assistance. AI-generated suggestions were checked against the mathematical arguments and the accompanying formal development; the formal claims are supported by Lean kernel checking. Responsibility for the final manuscript rests with the authors.

\appendix
\section{Local algebraic certificates}\label{app:local}
This appendix records the exact scope of the shorter algebraic checks used in the proof. All denominators are products of the strictly positive gaps in~\eqref{eq:order}, $m$, and $W(m)$. They may therefore be cleared without changing an inequality.

The two width expressions~\eqref{eq:slacks} are obtained directly from~\eqref{eq:rootwidths}. For example,
\[
w_1=d+r-f,\qquad
w_2+2aw_1=f(f+g)-d(e+2f+g)-(2d+e)(d+r-f).
\]
These identities explain why ordered roots alone are insufficient for several signs: the nonnegative coefficient widths impose further restrictions on the gaps.

The location $B<\rho<b$ is proved by clearing the positive denominator of~\eqref{eq:rho}, separately for $\rho-B$ and $b-\rho$. The upper bound $m\le(c-A)/(b-B)$ is used before the width inequalities. In the formal development these are the two assertions of \texttt{EvenGeometry.rho\_bounds}. The last-root test~\eqref{eq:lastrootbound} uses
\[
\mathcal L=D_1D_2\bigl(F_0(C)+f_2C(C-b)\bigr).
\]
With the gap substitution~\eqref{eq:gaps}, its exact certificate has the form
\begin{equation}\label{eq:lastcertificate}
2\mathcal L=\Sigma(a,d,e,f,g,r)\,M_0(f,g,r)+P_0(a,d,e,f,g,r),
\end{equation}
where
\[
\Sigma=\sigma-2a\kappa=w_2,\qquad M_0=r^2\bigl(r^2+(2f+3g)r+2g^2\bigr),
\]
and $P_0>0$ is a polynomial with positive coefficients in the six positive arguments. The full expressions and the identity are in \path{Algebra/LastRootCertificate.lean}; the identity with the rational source above is in \path{Construction/LastRoot.lean}. Keeping $\rho<b$ separate then gives $F(C)>0$ as in Proposition~\ref{prop:contacts}.

For the frequency bands, the source identities are formed from the same $F_0,F_z,G_z$ as in Section~\ref{sec:construction}. The baseline quotients are exactly~\eqref{eq:baseline1}--\eqref{eq:baseline2}. The lower correction quotients are $tZ_i/(mW)$ by~\eqref{eq:Z}. The upper correction quotients are obtained by replacing $U$ with $mV$ and dividing by the appropriate contact square. Their signs on $[A,c]$ are proved by exact coefficient decompositions using $\kappa,\sigma\ge0$, with comparisons between the two even endpoints on overlapping subintervals. As shown in Lemma~\ref{lem:local}, the lower central band follows from these upper inequalities and $G<0$.

For the first band, set $t=a+d\theta$ and $\lambda=(m-1)/(M-1)$. After multiplication by the positive factor $48f^2D_1D_2W$, the complete quotient has the form
\begin{equation}\label{eq:firstBernstein}
48f^2D_1D_2WQ_{00}(a+d\theta)
=\sum_{j=0}^4\sum_{k=0}^2 b_{jk}\binom4j\binom2k
\theta^j(1-\theta)^{4-j}\lambda^k(1-\lambda)^{2-k}.
\end{equation}
The 15 coefficients are strictly positive polynomials in $a,d,e,f,g,r$. Their identities and positive expressions are supplied by the modules \path{Algebra/FirstBand/Entry00} through \texttt{Entry42}. Since both coordinates lie in $[0,1]$, this proves the entire closed-band inequality, including its endpoints.

At the first ratio endpoint, the high-frequency quadratic $Z_0(1,b+v)$ has nonnegative coefficients in $v\ge0$ after the coefficient identities using $\kappa,\sigma$ are applied. This proves $\mathcal C_0(1,t)\ge0$ for $t\ge b$. For the low-frequency core $p(t)=\mathcal C_1(1,t)$, the three exact sign checks are
\[
p(0)\ge0,\qquad 4p(0)+B(p(1)-p(-1))\ge0,\qquad p(B)\ge0.
\]
They imply nonnegativity on $[0,B]$ through the quadratic identity
\begin{equation}\label{eq:lowfirstBernstein}
p(B\theta)=p(0)(1-\theta)^2
+\frac{4p(0)+B(p(1)-p(-1))}{2}\theta(1-\theta)+p(B)\theta^2.
\end{equation}
The underlying coefficient identities are in \path{Algebra/Tail/HighZero.lean} and \path{Algebra/Tail/LowFirst/}.

Table~\ref{tab:local} lists the identities and sign proofs used in these steps. The modules are in \path{Construction/}, except {\scriptsize\texttt{FrequencyAssembly}}, which is in \path{Synthesis/}, under \path{Lean/SPR/N6/}. They supply the hypotheses from the actual family. This distinction prevents a certificate for an unrelated polynomial or for independently chosen parameters from being substituted for the required source.
\begin{table}[htbp]
\centering\small
\begin{tabular}{@{}p{0.22\linewidth}p{0.24\linewidth}|p{0.22\linewidth}p{0.24\linewidth}@{}}
\toprule
Statement & Source module & Statement & Source module\\
\midrule
$B<\rho<b$ & \texttt{Barycenter} & $F(C)>0$ & \texttt{LastRoot}\\
$R_+>0$ & \texttt{UpperResidual} & $R_->0$ on $[A,B]$ & \texttt{BaselinePositive}\\
$Q_{00}>0$ on $[a,A]$ & \texttt{FirstBand} & $Q_{i0}>0$ on $[B,b]$ & {\scriptsize\texttt{FrequencyAssembly}}\\
$Q_{i1}>0$ on $[A,c]$ & \texttt{UpperInner} & First ratio endpoints & \texttt{CoreConcavity},\newline\texttt{LowFirstCore}\\
\bottomrule
\end{tabular}
\caption{Local sign obligations and their source-level verification. Module names omit the extension \texttt{.lean}.}\label{tab:local}
\end{table}

\section{A division-free source for the quadratic cores}\label{app:source}
For reproducing the finite coefficient calculation, it is preferable to avoid the rational interpolant in~\eqref{eq:Gz}. In the translated coordinate $\tau=t-B$, use the gaps $d,e,f,g,r$ in~\eqref{eq:gaps}. Define
\begin{align}
\mathcal W&=m(e+f+g+r)f^2(f+g)+(f+g+r)(e+f)(e+f+g)^2\notag\\
&\quad +(d+e+f)(e+f+g)^2(f+g)+m(d+e+f+g)(e+f)f^2,\label{eq:WW}\\
\mathcal N&=-em(e+f+g+r)f^2(f+g)
+f(d+e+f)(e+f+g)^2(f+g)\notag\\
&\quad +(f+g)m(d+e+f+g)(e+f)f^2.\label{eq:NN}
\end{align}
These satisfy $\mathcal W=W$ and $\mathcal N=W(\rho-B)$. Let
\begin{align}
\mathcal G(\tau)
&=m(-e-f-g-r)\bigl(-e(f+g+r)(e+f+g)\notag\\
&\qquad -(d+e+f)(e+f+g)(f+g)-m(d+e+f+g)f^2\bigr)\tau(\tau-f)\notag\\
&\quad +(-f-g-r)m(f+g)\bigl(e(e+f+g+r)-(d+e+f+g)(e+f)\bigr)(\tau+e)(\tau-f)\notag\\
&\quad -(d+e+f)gm\bigl((e+f+g+r)(f+g)-(d+e+f+g)g\bigr)(\tau+e)\tau\notag\\
&\quad +(d+e+f)(e+f+g)^2(f+g+r)(\tau+e)(\tau-f-g).
\label{eq:GG}
\end{align}
Then $\mathcal G(t-B)=mWG_z(t)$. The two quadratic polynomials $Z_i$ are uniquely determined by the exact divisions
\begin{align}
(\tau-f-g)^2 Z_0(m,B+\tau)
&=-m(\mathcal W\tau-\mathcal N)(\tau+e+d)(\tau-f)(\tau-f-g)\notag\\
&\quad +\mathcal G(\tau)(\tau+e)(\tau-f-g),\label{eq:ZZ0}\\
(\tau+e)^2 Z_1(m,B+\tau)
&=-m(\mathcal W\tau-\mathcal N)(\tau+e)\tau(\tau-f-g-r)\notag\\
&\quad +\mathcal G(\tau)(\tau+e)(\tau-f-g).\label{eq:ZZ1}
\end{align}
The divisors are monic in $\tau$; polynomial long division has zero remainder as an identity in the parameters. Hence~\eqref{eq:ZZ0}--\eqref{eq:ZZ1} specify all coefficients with no division by a frequency value. They also show that the total degree in $(d,r)$ is at most two after division. Together with~\eqref{eq:W}, \eqref{eq:T}, and~\eqref{eq:cores}, they fully specify the polynomial subjected to the finite endpoint calculation.

For clarity, its homogeneous form can equivalently be specified without any interpolation convention. For $S\ne0$, set
\begin{equation}\label{eq:homog}
\widehat{\mathcal C}_i(S,\alpha,\gamma;m,t)
=2S^5\mathcal C_i\left(m,t;\frac\alpha S,A,B,b,c,\frac\gamma S\right),
\end{equation}
where the six arguments after the semicolon indicate the root parameters of the core. The degree bounds above imply that the right side extends to a polynomial at $S=0$. This is the unique degree-five homogenization used in~\eqref{eq:21}. In particular, at $S=1$, $\alpha=a$, $\gamma=C$, it is exactly $2\mathcal C_i(m,t)$.

\section{The finite endpoint certificates}\label{app:certificate}
Fix positive $A,e,f,g$ and put $B=A+e$, $b=B+f$, $c=b+g$, and $M=(e+f+g)/f$. Form $\mathcal H_i$ by~\eqref{eq:homog} and~\eqref{eq:coordS}--\eqref{eq:coordC}. This gives an explicit algebraic definition of every coefficient $c^{(i)}_{jkl}$ in~\eqref{eq:21}, independently of its sign certificate.

The complete index set is
\begin{equation}\label{eq:indexset}
I=\{(j,k,5-j-k):0\le j\le5,\ 0\le k\le5-j\}.
\end{equation}
Table~\ref{tab:indices} displays all its elements, grouped by the first exponent. The homogeneity proof excludes every other monomial. Thus checking these 21 coefficients for each of the two cores is exhaustive.
\begin{table}[htbp]
\centering
\begin{tabular}{@{}cl|cl@{}}
\toprule
$j$ & $(j,k,l)$ & $j$ & $(j,k,l)$\\
\midrule
$0$ & $005,014,023,032,041,050$ & $3$ & $302,311,320$\\
$1$ & $104,113,122,131,140$ & $4$ & $401,410$\\
$2$ & $203,212,221,230$ & $5$ & $500$\\
\bottomrule
\end{tabular}
\caption{All 21 degree-five indices. A string such as $203$ denotes $(2,0,3)$. Both frequency domains use the entire set.}\label{tab:indices}
\end{table}

For the high-frequency coefficients, substitute $t=b+v$, $v\ge0$. For the low-frequency coefficients, substitute $t=Bv/(1+v)$ and multiply by $(1+v)^2$. The latter substitution covers $0\le t<B$; the endpoint $t=B$ follows by continuity of the original polynomial coefficient. At $m=M$, multiplication by $f^2$ clears the ratio denominator, since the ratio degree is at most two. Additional factors used to normalize individual coefficient sources are positive for $A,e,f,g>0$ and $v\ge0$. The equality relating each normalized source to its original coefficient is checked before its sign is used.

The larger certificates retain the two lowest powers of the gap $f$ as complete factors. A normalized source is decomposed as
\begin{equation}\label{eq:certform}
\mathcal P(A,e,f,g,v)
=\mathcal P_0(A,e,g,v)+f\mathcal P_1(A,e,g,v)
+f^2\mathcal R(A,e,f,g,v).
\end{equation}
The polynomial $\mathcal R$ has a finite expression with nonnegative coefficients. The two lower-gap factors are proved nonnegative by their own exact factorizations and polynomial identities. Their individual coefficients in $A$ need not all be nonnegative; the complete factors are therefore retained. For example, the high-frequency coefficient indexed by $320$ is assembled in \texttt{FullHigh320.lean} from seven exact $A$-layer identities and the two whole-factor sign proofs. The low-frequency coefficient indexed by $005$ is assembled from ten such layers. The smaller pure and mixed coefficients have separate direct certificates. In every case, the conclusion concerns the full original coefficient, not a truncated expansion.

The certificate proof consists of three logically separate checks. First, polynomial identities establish the exact source and its decomposition. Second, nonnegative factors and nonnegative-coefficient remainders prove the sign of that source on the entire parameter domain. Third, positive normalization factors and the coordinate reconstruction identify it with the coefficient required in~\eqref{eq:42}. These checks are made in the accompanying Lean development. The finite index theorem then combines every element of~\eqref{eq:indexset}; it leaves no residual coefficient assumption.

The principal source locations are \path{Algebra/Tail/OuterCore.lean} for the homogeneous core, \path{Algebra/Tail/OuterIndexCases.lean} for the complete index set, and \path{Algebra/Tail/CoreEndpoint/} for the coefficient certificates. Their application to the actual family is in \path{Construction/OuterCorePolynomial.lean}. The final assembly is \path{Synthesis/AllCoreCoefficients.lean}. All paths are relative to \path{Lean/SPR/N6/}. The companion source index identifies the exact proof snapshot used for this manuscript and the verification commands. The companion archive \path{SexticSPR-Proof-Sources.zip} supplies the complete formal sources, including every expanded certificate expression. The printed appendix retains the source formulas and the finite verification scheme.

\begingroup
\small
\bibliographystyle{unsrt}
\bibliography{references}
\endgroup
\end{document}